\begin{document}
\newtheorem{assumption}{Assumption}
\newtheorem{definition}{Definition}
\newtheorem{proof}{Proof}
\newtheorem{remark}{Remark}
\newtheorem{proposition}{Proposition}
\newtheorem{lemma}{Lemma}
\newtheorem{theorem}{Theorem}
\newtheorem{corollary}{Corollary}

\title{Data-driven Moving Horizon Estimation for Angular Velocity of Space Noncooperative Target in Electromagnetic De-tumbling Mission}
\author{Xiyao Liu,~Haitao Chang,~Fei Hui,~Zhenyu Lu,~Yizhai Zhang,~Panfeng Huang, \IEEEmembership{Senior Member,~IEEE}
	\thanks{This work has been submitted to the IEEE for possible publication. Copyright may be transferred without notice, after which this version may no longer be accessible.}
	\thanks{This research was supported by the National Natural Science Foundation of China (Grant No.62403072).}
	\thanks{X. Liu and F. Hui is with School of Electronics and Control Engineering, Chang’an University, Xi’an, Shaanxi 710064, China, E-mail: (liuxiyao@chd.edu.cn;feihui@chd.edu.cn). H. Chang, ,Y. Zhang and P. Huang are with the Research Center for Intelligent Robotics, School of Astronautics, Northwestern Polytechnical University, Xi’an 710072, China, E-mail: (htchang@nwpu.edu.cn;zhangyizhai@nwpu.edu.cn; pfhuang@nwpu.edu.cn; ); Z. Lu is with School of Automation Science and Engineering, South China University of Technology, Guangzhou, Guangdong Province, E-mail: (luzhenyu@scut.edu.cn).}
}

\markboth{Journal of \LaTeX\ Class Files, April~2026}%
{Shell \MakeLowercase{\textit{et al.}}: A Sample Article Using IEEEtran.cls for IEEE Journals}


\maketitle

\begin{abstract}
Angular velocity estimation is critical for electromagnetic eddy current de-tumbling of noncooperative space targets. 
However, unknown model of the noncooperative target and few observation data make the model-based estimation methods challenged.
In this paper, a Data-driven Moving Horizon Estimation (Dd-MHE) method is proposed to estimate the angular velocity of the noncooperative target by electromagnetic torque. 
In this method, model-free state estimation of the angular velocity can be achieved using only one historical trajectory data that satisfies the rank condition. 
With local linear approximation, the Willems’ fundamental lemma is extended to nonlinear autonomous systems, and the rank condition for the historical trajectory data is deduced.
Then, a data-driven moving horizon estimation algorithm based on the $M$-step Lyapunov function is designed, and the time-discount robust stability of the algorithm is given.
In order to illustrate the effectiveness of the proposed algorithm, experiments and simulations are performed to estimate the angular velocity in electromagnetic eddy current de-tumbling with only electromagnetic torque measurement.
\end{abstract}

\def\abstractname{Note to Practitioners}
\begin{abstract}
	The motivation of this paper is to estimate the angular velocity of high-speed, non-cooperative targets in electromagnetic de-tumbling mission at relatively short distances, without the need for magnetic sensors or electromagnets on the targets. To achieve this, we propose a data-driven moving horizon estimation strategy that leverages Willems' fundamental lemma and moving horizon estimation methods based on both historical measurements and current electromagnetic torque measurements. Through experiments and simulations, we have confirmed the effectiveness and accuracy of our methods, which can be integrated with existing non-cooperative target angular velocity measurement systems and extended to information estimation problems beyond unknown models. Going forward, our methods could be combined with optical and inertial measurement to achieve comprehensive information estimation in robotics and aerospace applications.
\end{abstract}

\begin{IEEEkeywords}
Data-driven estimation, Moving horizon estimation, Angular velocity estimation, Magnetic forces, Robust stability.
\end{IEEEkeywords}

\section{Introduction}
\label{sec:introduction}
\IEEEPARstart{A}{ngular} velocity information is critical for trajectory planning and control of robots and spacecraft \cite{11010080}. Despite their widespread use, cameras suffer from a limited field of view, making angular velocity measurement of unmarked, high-speed rotating targets a challenging task \cite{10629195}. Magnetic sensors offer a viable non-contact solution for attitude estimation \cite{10599514}. However, their application is limited by the requirement for on-board electromagnets or magnetic sensors mounted on the target. This prerequisite is often unmet in non-cooperative target electromagnetic de-tumbling missions \cite{liu2022robust}.

In the de-tumbling missions \cite{liu2022robust}, the electromagnetic torque acting on the chaser can be directly measured, and an inherent physical relationship exists between this torque and the target’s angular velocity. The target’s angular velocity can thus be estimated via the measured electromagnetic torque. Unlike conventional magnetic sensing or actuation technologies \cite{10752578}, this approach requires no magnetic sensors or permanent magnets installed on the space non-cooperative target; the only available data for state estimation is the eddy current-induced electromagnetic torque on the chaser.

However, non-cooperative targets have unknown physical properties and complex geometries, making it impossible to accurately model the mapping between the target’s angular velocity and the induced electromagnetic torque. This fundamentally hinders model-based state estimation methods, such as the Unscented Kalman Filter \cite{11408264}, and motivates the exploration of model-free state estimation approaches.

In recent years, data-driven methods have advanced rapidly, with three mainstream branches: learning-based, Koopman operator-based, and Willems’ fundamental lemma-based methods. Learning-based data-driven state estimation relies on various deep learning networks, hyperparameter tuning, and corresponding training strategies \cite{jin2021new}. Such methods have been widely applied in power systems, but require abundant data to learn an accurate system model \cite{10643712}. The Koopman operator provides a framework to represent nonlinear systems in a linear form, complementing classical and statistical perspectives on dynamic systems \cite{otto2021koopman}. Surana \textit{et al.} \cite{surana2020koopman} first proposed a Koopman operator-based linear observer for nonlinear systems, and extended this method to constrained state estimation. Guo \textit{et al.} \cite{guo2021koopman} combined the computational efficiency of Koopman state estimation with the generality of kernel embedding via Random Fourier Features, and verified the algorithm on a mobile robot platform.

The above methods all require large datasets to train accurate deep learning networks or resolve full system eigenvalues. However, in electromagnetic eddy current de-tumbling missions \cite{liu2021eddy}, experiments are unrepeatable, and only limited historical data from a single trajectory is available. A data-driven method rooted in behavioral theory—Willems’ fundamental lemma \cite{willems2005note}—offers a viable solution for angular velocity estimation in this scenario. This lemma states that for a controllable linear time-invariant system, any trajectory of the system can be reconstructed from corresponding input-output data, as long as the historical input satisfies the persistently exciting condition \cite{faulwasser2023behavioral}. Building on this, van Waarde \textit{et al.} \cite{9062331} extended the lemma to state-space systems with multiple datasets, and Schmitz \textit{et al.} \cite{9739697} further generalized it to linear descriptor systems. Both extensions effectively reduced the data volume requirement for data-driven analysis, and verified the lemma’s superiority in small-data scenarios.

For lemma-based state estimation, Mishra \textit{et al.} \cite{10591180} developed data-driven state reconstruction and a Kalman filter-like algorithm for linear systems in both deterministic and stochastic settings. Wolff \textit{et al.} \cite{10453955} proposed robust data-driven moving horizon estimation for linear discrete-time systems with noisy offline and online data, and proved its practical robust exponential stability. Wei \textit{et al.} \cite{WEI2026112614} designed a distributed data-driven unknown-input observer for continuous-time linear systems subject to unknown inputs and disturbances. However, these methods are either limited to linear systems or cannot handle system disturbances, while the angular velocity estimation problem in eddy current de-tumbling is inherently nonlinear and disturbed. Therefore, a new data-driven estimation method tailored to this mission is urgently needed, which can adapt to unknown nonlinear systems with disturbances under small-data constraints.

Motivated by the above challenges, this paper proposes a data-driven moving horizon estimation algorithm for angular velocity estimation of non-cooperative targets with completely unknown dynamics. Based on local linear approximation, the algorithm can address the state estimation problem of unknown nonlinear systems in a local region. In addition, the method supports angular velocity estimation under constraint conditions, and its time-discounted robust stability under local linear approximation is rigorously proved \cite{knufer2020time}.

The main contributions of this paper are as follows:
\begin{enumerate}
	\item A novel model-free data-driven method for non-cooperative target angular velocity estimation is proposed, which requires only a small volume of data. Unlike learning-based \cite{chen2021data} and Koopman-based \cite{guo2021koopman} methods, this approach is built on Willems’ fundamental lemma from behavioral theory, requiring only one historical trajectory satisfying the rank condition to complete data-driven estimation.
	\item A dedicated data-driven moving horizon estimation (Dd-MHE) algorithm is developed, which is applicable to nonlinear autonomous systems via local linear approximation. Compared with existing methods \cite{adachi2021dual,turan2021data,wolff2021data}, this algorithm can handle state estimation of nonlinear systems, while guaranteeing time-discounted robust stability under constraint satisfaction and disturbances. In addition, a pole placement method for unknown systems is provided, which gives the lower bound of parameters to ensure the stability of the Dd-MHE algorithm.
\end{enumerate}

The remainder of this paper is organized as follows. Section \ref{sec2} establishes the eddy current de-tumbling dynamic model, and presents the extension of Willems’ fundamental lemma to autonomous systems. Section \ref{sec3} details the proposed data-driven moving horizon estimation algorithm and its time-discounted robust stability analysis. Section \ref{sec:4} verifies the effectiveness of the proposed method via simulations and experiments of eddy current de-tumbling angular velocity estimation.

\textit{Notation}: $\boldsymbol{I}_n$ represents $n \times n$-dimensional identity matrix, $\otimes$ denotes the Kronecker product. $\boldsymbol{1}_{n}$ represents an ${n}$-dimensional column vector with all entries equal to 1. $\boldsymbol{0}$ denotes a matrix with corresponding dimensions and all entries equal to 0. The set of integers is expressed as $\mathbb{I}$, and the set of integers greater than or equal to $a$ or between $a$ and $b$ are expressed as $\mathbb{I}_{\geq a}$ and $\mathbb{I}_{[a,b]}$ respectively. The maximal eigenvalues of matrix $\boldsymbol{P}$ are denoted by $\lambda_{max}(\boldsymbol{P})$.
A data sequence $\boldsymbol{u}_{[i, j]}$ is a column vector with $\boldsymbol{u}_{[i, j]} = [\boldsymbol{u}(i)^\top \ \  \boldsymbol{u}(i+1)^\top \ \cdots \ \boldsymbol{u}(j)^\top]^\top$. For any matrix $\boldsymbol{A}$, $\boldsymbol{A}^{\dagger}$ denotes the pseudoinverse of $\boldsymbol{A}$.
For any vector $\boldsymbol{a} = [a_1 \ a_2 \ a_3]^\top$, $(\cdot)^{\times}$ denotes the skew-symmetric matrix as follows:
\begin{equation}
	\boldsymbol{a}^{\times} = 
	\begin{bmatrix}
		0 			& -a_{3} 	&  a_{2} \\
		a_{3} 	& 0 			&  -a_{1} \\
		-a_{2} & a_{1} 	&  0
	\end{bmatrix}.
\end{equation}

The Hankel matrix is given by:
\begin{definition}
	\label{Def:1}
	The Hankel matrix $\boldsymbol{H}_k(\boldsymbol{u}_{[i, j]})$ with depth $k \in \mathbb{I}_{\geq 1}$ of any vector sequence $\boldsymbol{u}_{[i, j]}$ is defined as
	$$
	\boldsymbol{H}_k(\boldsymbol{u}_{[i, j]}) \! =\!\!\! \begin{bmatrix}
		\boldsymbol{u}(i) & \boldsymbol{u}(i+1) & \cdots & \boldsymbol{u}(j - k + 1) \\
		\boldsymbol{u}(i+1) & \boldsymbol{u}(i+2) & \cdots & \boldsymbol{u}(j-k+2) \\
		\vdots & \vdots & \ddots & \vdots \\
		\boldsymbol{u}(i+k-1) & \boldsymbol{u}(i+k) & \cdots & \boldsymbol{u}(j) \notag
	\end{bmatrix}\!.$$
\end{definition}

\section{Problem setup and preliminaries}
\label{sec2}
In this section, we present and linearize the target's attitude dynamics model in the eddy current de-tumbling according to \cite{liu2021eddy} and then extend Willems' fundamental lemma to linear autonomous systems with offsets in conjunction with \cite{berberich2022linear}.

\subsection{Attitude dynamics model in eddy current de-tumbling}
According to \cite{liu2021eddy}, the attitude dynamics model of the unknown target in eddy current de-tumbling is
\begin{equation}
	\label{Euler rotational equations of target}
	\boldsymbol{J}_t\dot{\boldsymbol{\omega}}_t
	+{\boldsymbol{\omega}}_t^{\times} \boldsymbol{J}_t{\boldsymbol{\omega}}_t
	=\boldsymbol{\tau}_{t}(\boldsymbol{\omega}_t),
\end{equation}
where $\boldsymbol{J}_t$ denotes the inertia tensor of the target, ${\boldsymbol{\omega}}_t$ is the target's angular velocity, ${\boldsymbol{\tau}_{t}}(\boldsymbol{\omega}_t)$ denotes the eddy current de-tumbling torque.

Since the target is unknown, the parameter $\boldsymbol{J}_t$ and function ${\boldsymbol{\tau}_{t}}(\boldsymbol{\omega}_t)$ in the above equation are not available. Therefore, the data-driven method is needed to estimate the target's angular velocity $\boldsymbol{\omega}_t$. The measured output is the eddy current de-tumbling torque on the chaser given as
\begin{align}
	\label{tau_c}
	\boldsymbol{y} = \boldsymbol{\tau}_{c}(\boldsymbol{\omega}_t).
\end{align}
Similarly, $\boldsymbol{\tau}_{c}(\boldsymbol{\omega}_t)$ is a function of ${\boldsymbol{\omega}}_t$ and is also not available. Note that one can obtain approximate equations for $\boldsymbol{\tau}_{t}(\boldsymbol{\omega}_t)$ and $\boldsymbol{\tau}_{c}(\boldsymbol{\omega}_t)$ by (6) and (9) in \cite{gomez2017guidance}. However, since the target is completely unknown, the parameters in these equations are also not available. Therefore, for the sake of brevity, we directly consider $\boldsymbol{\tau}_{t}(\boldsymbol{\omega}_t)$ and $\boldsymbol{\tau}_{c}(\boldsymbol{\omega}_t)$ as unknown functions.

Combining \eqref{Euler rotational equations of target}-\eqref{tau_c} and the forward-Euler method, one can directly deduce the following standard nonlinear discrete-time state space model:
\begin{align}
	\label{state space model}
	{\boldsymbol{x}}(t+1) = \boldsymbol{f}(\boldsymbol{x}(t)), \boldsymbol{y}(t) = \boldsymbol{h}(\boldsymbol{x}(t)),
\end{align}
where $\boldsymbol{x} = {\boldsymbol{\omega}}_t$ denotes the state, the discrete-time state function $\boldsymbol{f}(\boldsymbol{x}(t)) =\boldsymbol{x}(t) + T_s \boldsymbol{J}_t^{-1}\left( \boldsymbol{\tau}_{t}(\boldsymbol{x}) - \boldsymbol{x}_t^{\times}\boldsymbol{J}_t\boldsymbol{x}_t \right) $, $T_s$ is the sample interval, and the output function $\boldsymbol{h}(\boldsymbol{x}(t)) = \boldsymbol{\tau}_{c}(\boldsymbol{x}(t))$. 

In the following content, we need the linearization model of \eqref{state space model}. Thus, we define the linear autonomous system with offsets resulting from the linearization of \eqref{state space model} at point $\tilde{\boldsymbol{x}}$ as
\begin{align}
	\label{Eq:LDAS}
	\begin{aligned}
		&\boldsymbol{x}(t+1) = \boldsymbol{A}_{\tilde{\boldsymbol{x}}} \boldsymbol{x}(t) + \boldsymbol{e}_{\tilde{\boldsymbol{x}}} + \boldsymbol{w}(t),  \\
		&\boldsymbol{y}(t) = \boldsymbol{C}_{\tilde{\boldsymbol{x}}} \boldsymbol{x}(t) + \boldsymbol{r}_{\tilde{\boldsymbol{x}}} + \boldsymbol{v}(t),
	\end{aligned}
\end{align}
where $\boldsymbol{A}_{\tilde{\boldsymbol{x}}}=\left.\frac{\partial \boldsymbol{f}}{\partial \boldsymbol{x}}\right|_{\tilde{\boldsymbol{x}}}$, $\boldsymbol{e}_{\tilde{\boldsymbol{x}}} = \boldsymbol{f}(\tilde{\boldsymbol{x}}) - \boldsymbol{A}_{\tilde{\boldsymbol{x}}} \tilde{\boldsymbol{x}}$, $\boldsymbol{C}_{\tilde{\boldsymbol{x}}} =\left.\frac{\partial \boldsymbol{h}}{\partial \boldsymbol{x}}\right|_{\tilde{\boldsymbol{x}}}$, $\boldsymbol{r}_{\tilde{\boldsymbol{x}}} =\boldsymbol{h}(\tilde{\boldsymbol{x}})-\boldsymbol{C}_{\tilde{\boldsymbol{x}}} \tilde{\boldsymbol{x}}$, and $\boldsymbol{w}(t) = \boldsymbol{f}(\boldsymbol{x}(t)) - \boldsymbol{A}_{\tilde{\boldsymbol{x}}} \boldsymbol{x}(t) - \boldsymbol{e}_{\tilde{\boldsymbol{x}}}$ is the state function linearization error, 
$\boldsymbol{v}(t) = \boldsymbol{h}(\boldsymbol{x}(t)) - \boldsymbol{C}_{\tilde{\boldsymbol{x}}} \boldsymbol{x}(t) {{-}} \boldsymbol{r}_{\tilde{\boldsymbol{x}}}$ is the output function linearization error. 
Note that $\boldsymbol{A}_{\tilde{\boldsymbol{x}}}$, $\boldsymbol{e}_{\tilde{\boldsymbol{x}}}$, $\boldsymbol{C}_{\tilde{\boldsymbol{x}}}$ and $\boldsymbol{r}_{\tilde{\boldsymbol{x}}}$ in \eqref{Eq:LDAS} are unknown, and $\boldsymbol{w}(t)$ and $\boldsymbol{v}(t)$ can be regarded as process disturbance and measurement noise respectively.

\subsection{Extended Willems' fundamental lemma}
To derive the data-driven method used in this paper, we extend the Willems' fundamental lemma \cite{berberich2022linear} to the linear autonomous system with offsets.

\begin{lemma}\label{Prop:1}
	For a linear autonomous system with offsets: 
	\begin{align}
		\label{Eq:LAS}
		\begin{aligned}
			&\boldsymbol{x}(t+1) = \boldsymbol{A} \boldsymbol{x}(t) + {{\boldsymbol{e}_o}}, \\
			&\boldsymbol{y}(t) = \boldsymbol{C} \boldsymbol{x}(t) + {{\boldsymbol{r}_o}},
		\end{aligned}
	\end{align}
	{where $\boldsymbol{e}_o$ and $\boldsymbol{r}_o$ denote the offsets of the state equation and the output equation, respectively. }
	Suppose {one data sequence} $ \{ \boldsymbol{x}_{[0, T]}, \boldsymbol{y}_{[0, T-1]} \}$ is a trajectory of \eqref{Eq:LAS}, and satisfies the rank condition as
	\begin{align}
		\label{rank:H1}
		{\mathrm{rank}} \left( \begin{bmatrix}
			\boldsymbol{H}_1(\boldsymbol{x}_{[0,T-L]}) \\
			\boldsymbol{1}_{T-L+1}^{\top}
		\end{bmatrix}
		\right) = n + 1,
	\end{align}
	{where $n$ is the dimension of the system state $\boldsymbol{x}$, and the data length $T$ satisfies that $T \geq L + n$.}
	Then, any data sequence marked as $\{{\boldsymbol{x}}'_{[0, L]},{\boldsymbol{y}}'_{[0, L-1]}\}$ is a trajectory of the system \eqref{Eq:LAS},
	if and only if there exists $\boldsymbol{\alpha} \in \mathbb{R}^{T-L+1}$ such that
	\begin{align}\label{proposition:1}
		\begin{bmatrix}
			\boldsymbol{H}_{L+1}(\boldsymbol{x}_{[0, T]}) \\
			\boldsymbol{H}_{L}(\boldsymbol{y}_{[0, T-1]}) \\
			\boldsymbol{1}^{\top}_{T-L+1}
		\end{bmatrix} \boldsymbol{\alpha} = 
		\begin{bmatrix}
			{{{\boldsymbol{x}}'_{[0, L]}}} \\
			{{\boldsymbol{y}'_{[0, L-1]}}} \\
			1
		\end{bmatrix}.
	\end{align}
	
\end{lemma}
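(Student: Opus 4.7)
\textbf{Proof plan for Lemma \ref{Prop:1}.}

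The plan is to prove both implications by exploiting two facts: (i) the system \eqref{Eq:LAS} is autonomous, so any trajectory of length $L+1$ is fully determined by its initial state $\boldsymbol{x}(0)$, and (ii) the offset terms $\boldsymbol{e}_o,\boldsymbol{r}_o$ behave well under affine combinations (i.e. combinations whose coefficients sum to $1$), which is exactly what the row $\boldsymbol{1}^{\top}_{T-L+1}$ in \eqref{proposition:1} enforces.

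For the ``if'' direction, I would first observe that each column $i\in\{0,\dots,T-L\}$ of the stacked Hankel matrix in \eqref{proposition:1} is itself a length-$(L+1)$ state trajectory paired with a length-$L$ output trajectory of \eqref{Eq:LAS}, namely $\{\boldsymbol{x}(i+\cdot),\boldsymbol{y}(i+\cdot)\}$. Fix $\boldsymbol{\alpha}$ satisfying \eqref{proposition:1} and set $\boldsymbol{x}'(k)=\sum_i \alpha_i \boldsymbol{x}(i+k)$, $\boldsymbol{y}'(k)=\sum_i \alpha_i \boldsymbol{y}(i+k)$. Using $\boldsymbol{x}(i+k+1)=\boldsymbol{A}\boldsymbol{x}(i+k)+\boldsymbol{e}_o$ and $\boldsymbol{y}(i+k)=\boldsymbol{C}\boldsymbol{x}(i+k)+\boldsymbol{r}_o$ together with $\sum_i\alpha_i=1$, I would obtain $\boldsymbol{x}'(k+1)=\boldsymbol{A}\boldsymbol{x}'(k)+\boldsymbol{e}_o$ and $\boldsymbol{y}'(k)=\boldsymbol{C}\boldsymbol{x}'(k)+\boldsymbol{r}_o$, so $\{\boldsymbol{x}'_{[0,L]},\boldsymbol{y}'_{[0,L-1]}\}$ is a trajectory of \eqref{Eq:LAS}.

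For the ``only if'' direction, let $\{\boldsymbol{x}'_{[0,L]},\boldsymbol{y}'_{[0,L-1]}\}$ be any trajectory. Because the system is autonomous, this trajectory is completely determined by $\boldsymbol{x}'(0)$. The rank condition \eqref{rank:H1} states precisely that the map $\boldsymbol{\alpha}\mapsto\bigl(\boldsymbol{H}_1(\boldsymbol{x}_{[0,T-L]})\boldsymbol{\alpha},\boldsymbol{1}^{\top}\boldsymbol{\alpha}\bigr)$ is surjective onto $\mathbb{R}^{n+1}$, so I can pick $\boldsymbol{\alpha}\in\mathbb{R}^{T-L+1}$ with $\boldsymbol{H}_1(\boldsymbol{x}_{[0,T-L]})\boldsymbol{\alpha}=\boldsymbol{x}'(0)$ and $\boldsymbol{1}^{\top}\boldsymbol{\alpha}=1$. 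Applying the ``if'' construction to this $\boldsymbol{\alpha}$ produces a trajectory $\{\tilde{\boldsymbol{x}}'_{[0,L]},\tilde{\boldsymbol{y}}'_{[0,L-1]}\}$ with $\tilde{\boldsymbol{x}}'(0)=\boldsymbol{x}'(0)$; by uniqueness of autonomous trajectories it must coincide with $\{\boldsymbol{x}'_{[0,L]},\boldsymbol{y}'_{[0,L-1]}\}$, which is exactly what \eqref{proposition:1} asserts.

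The only delicate step I expect is justifying why the augmented rank condition \eqref{rank:H1} (requiring $n+1$ rather than $n$) is the correct replacement of persistent excitation in this autonomous-with-offset setting. The subtlety is that without the extra $\boldsymbol{1}^{\top}$ row, linear combinations of trajectories need not be trajectories once $\boldsymbol{e}_o,\boldsymbol{r}_o$ are present; the affinity constraint $\boldsymbol{1}^{\top}\boldsymbol{\alpha}=1$ in both \eqref{rank:H1} and \eqref{proposition:1} is what makes the offsets reproduce themselves, and this is where the data-length bound $T\geq L+n$ enters to ensure the Hankel block is wide enough to admit rank $n+1$.
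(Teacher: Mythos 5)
Your proposal is correct and follows essentially the same route as the paper: the ``if'' direction rests on the fact that affine combinations ($\boldsymbol{1}^{\top}\boldsymbol{\alpha}=1$) of shifted windows of the trajectory reproduce the offsets $\boldsymbol{e}_o,\boldsymbol{r}_o$ (the paper writes this via an explicit factorization through the stacked $\boldsymbol{A}$-powers and observability matrices, you write it column-wise, but these are the same computation), and the ``only if'' direction uses the augmented rank condition to find an $\boldsymbol{\alpha}$ matching $\boldsymbol{x}'(0)$ with coefficients summing to one, then invokes uniqueness of autonomous trajectories. If anything, your appeal to surjectivity of the full $(n+1)$-row map is slightly more careful than the paper's wording, which cites only $\mathrm{rank}(\boldsymbol{H}_1)=n$ at that step.
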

\begin{proof}	
	\textbf{Proof of ``if'':}
	
	Before starting the proof, we define the following matrix:
	\begin{align}
		\begin{aligned}
			&\boldsymbol{\varPhi}_K = \begin{bmatrix}
				\boldsymbol{I}_n^{\top} &
				\boldsymbol{A}^{\top} &
				\cdots^{\top} &
				(\boldsymbol{A}^{K-1})^{\top}
			\end{bmatrix}^{\top}, \\
			&\boldsymbol{\varGamma}_K = \begin{bmatrix}
				\boldsymbol{C}^{\top} &
				(\boldsymbol{CA})^{\top} &
				\cdots &
				(\boldsymbol{C}\boldsymbol{A}^{K-1})^{\top}
			\end{bmatrix}^{\top},\\ 
			&\boldsymbol{\varPsi}_K = \begin{bmatrix}
				\boldsymbol{0}^{\top} &
				\boldsymbol{I}_n^{\top} &
				\cdots &
				(\sum_{i=0}^{K-2}\boldsymbol{A}^{i})^{\top}
			\end{bmatrix}^{\top},\\
			&\boldsymbol{\varPi}_K = \begin{bmatrix}
				\boldsymbol{0}^{\top} &
				\boldsymbol{C}^{\top} &
				\cdots &
				(\boldsymbol{C}\sum_{i=0}^{K-2}\boldsymbol{A}^{i})^{\top}
			\end{bmatrix}^{\top}.
		\end{aligned}
	\end{align}
	
	Since $\{ \boldsymbol{x}_{[0, T]}, \boldsymbol{y}_{[0, T-1]} \}$ is a trajectory of the system \eqref{Eq:LAS}, the following equation can be derived as
	\begin{align}\label{App:12}
		\begin{bmatrix}
			\boldsymbol{H}_{L+1}(\boldsymbol{x}_{[0, T]}) \\
			\boldsymbol{H}_{L}(\boldsymbol{y}_{[0, T-1]})
		\end{bmatrix} \! \! = \! \!
		\begin{bmatrix}
			\boldsymbol{\varPhi}_{L+1} \\
			\boldsymbol{\varGamma}_{L}
		\end{bmatrix}  \! \! \boldsymbol{x}_L \! \! + \! \!
		\begin{bmatrix}
			\boldsymbol{\varPsi}_{L+1} \\
			\boldsymbol{\varPi}_L
		\end{bmatrix} \!  \boldsymbol{e}_L \! \! + \! \! 
		\begin{bmatrix}
			\boldsymbol{0} \\
			\boldsymbol{I}_{p,L}
		\end{bmatrix} \! \boldsymbol{r}_L,
	\end{align}
	where $\boldsymbol{x}_L = \boldsymbol{H}_1(\boldsymbol{x}_{[0,T-L]})$, $\boldsymbol{e}_L = \boldsymbol{1}_{T-L+1}^{\top} \otimes {{\boldsymbol{e}_o}}$, $\boldsymbol{r}_L = \boldsymbol{1}_{T-L+1}^{\top} \otimes {{\boldsymbol{r}_o}}$, $\boldsymbol{I}_{p,L} = \boldsymbol{I}_p\otimes \boldsymbol{1}_{L}$.
	
	By substituting \eqref{App:12} into \eqref{proposition:1}, one can obtain
	\begin{align}\label{App:13}
		\begin{bmatrix}
			{{{\boldsymbol{x}}'_{[0, L]}}} \\
			{{\boldsymbol{y}'_{[0, L-1]}}}
		\end{bmatrix} 
		& \! \! =  \! \! 
		\begin{bmatrix}
			\boldsymbol{\varPhi}_{L+1} \\
			\boldsymbol{\varGamma}_{L}
		\end{bmatrix}\boldsymbol{x}_L \boldsymbol{\alpha}  + 	
		\begin{bmatrix}
			\boldsymbol{\varPsi}_{L+1} \\
			\boldsymbol{\varPi}_L
		\end{bmatrix}\boldsymbol{e}_L \boldsymbol{\alpha} + 
		\begin{bmatrix}
			\boldsymbol{0} \\
			\boldsymbol{I}_{p,L}
		\end{bmatrix} \boldsymbol{r}_L  \boldsymbol{\alpha} \notag \\
		& \! \! = \! \!
		\begin{bmatrix}
			\boldsymbol{\varPhi}_{L+1} \\
			\boldsymbol{\varGamma}_{L}
		\end{bmatrix}\boldsymbol{x}_L \boldsymbol{\alpha}  + 	
		\begin{bmatrix}
			\boldsymbol{\varPsi}_{L+1} \\
			\boldsymbol{\varPi}_L
		\end{bmatrix}{{\boldsymbol{e}_o}} + 
		\begin{bmatrix}
			\boldsymbol{0} \\
			\boldsymbol{I}_{p,L}
		\end{bmatrix}{{\boldsymbol{r}_o}}.
	\end{align}
	In the above derivation, $\boldsymbol{1}^{\top}_{T-L+1} \boldsymbol{\alpha} = 1$ is adopted to make $\boldsymbol{e}_L \boldsymbol{\alpha} = {{\boldsymbol{e}_o}}$ and $\boldsymbol{r}_L \boldsymbol{\alpha} = {\boldsymbol{e}_o}$.
	
	Equation \eqref{rank:H1} directly derives that $\boldsymbol{x}_L = \boldsymbol{H}_1(\boldsymbol{x}_{[0,T-L]})$ has rank $n$. Thus, \eqref{App:13} means that $\{ {{{\boldsymbol{x}}'_{[0, L]}}}, {{\boldsymbol{y}'_{[0, L-1]}}} \}$ is a trajectory of the system \eqref{Eq:LAS} with initial condition ${{\boldsymbol{x}'(0)}} := \boldsymbol{H}_1(\boldsymbol{x}_{[0,T-L]}) \boldsymbol{\alpha}$.
	
	\textbf{Proof of ``Only if'':}
	
	Since $\boldsymbol{H}_1(\boldsymbol{x}_{[0,T-L]})$ has rank $n$, it implies the existence of a vector $\boldsymbol{\alpha} \in \mathbb{R}^{T-L+1}$ such that
	\begin{align}
		\begin{bmatrix}
			\boldsymbol{H}_1(\boldsymbol{x}_{[0,T-L]}) \\
			\boldsymbol{1}^{\top}_{T-L+1}
		\end{bmatrix}\boldsymbol{\alpha} = 
		\begin{bmatrix}
			{{\boldsymbol{x}'(0)}} \\
			1
		\end{bmatrix}.
	\end{align}
	
	Considering that $\{ {{{\boldsymbol{x}}'_{[0, L]}}}, {{\boldsymbol{y}'_{[0, L-1]}}} \}$ is a trajectory of system \eqref{Eq:LAS}, the following derivation is available:
	\begin{align}\label{App:only if}
		\begin{bmatrix}
			{{{\boldsymbol{x}}'_{[0, L]}}} \\
			{{\boldsymbol{y}'_{[0, L-1]}}}
		\end{bmatrix} 
		&=
		\begin{bmatrix}
			\boldsymbol{\varPhi}_{L+1} \\
			\boldsymbol{\varGamma}_{L}
		\end{bmatrix}{{\boldsymbol{x}'(0)}}  + 	
		\begin{bmatrix}
			\boldsymbol{\varPsi}_{L+1} \\
			\boldsymbol{\varPi}_L
		\end{bmatrix}{{\boldsymbol{e}_o}} + 
		\begin{bmatrix}
			\boldsymbol{0} \\
			\boldsymbol{I}_{p,L}
		\end{bmatrix}{{\boldsymbol{r}_o}} \notag \\
		& \overset{\eqref{App:13}}{=}
		\begin{bmatrix}
			\boldsymbol{\varPhi}_{L+1} \\
			\boldsymbol{\varGamma}_{L}
		\end{bmatrix}\boldsymbol{x}_L \boldsymbol{\alpha} \! + 	\!
		\begin{bmatrix}
			\boldsymbol{\varPsi}_{L+1} \\
			\boldsymbol{\varPi}_L
		\end{bmatrix}\boldsymbol{e}_L \boldsymbol{\alpha} \! + \!
		\begin{bmatrix}
			\boldsymbol{0} \\
			\boldsymbol{I}_{p,L}
		\end{bmatrix} \boldsymbol{r}_L\boldsymbol{\alpha} \notag \\
		& \overset{\eqref{App:12}}{=}
		\begin{bmatrix}
			\boldsymbol{H}_{L+1}(\boldsymbol{x}_{[0, T]}) \\
			\boldsymbol{H}_{L}(\boldsymbol{y}_{[0, T-1]})
		\end{bmatrix}\boldsymbol{\alpha}.
	\end{align}
	
	Thereby, the proof of \textit{Lemma \ref{Prop:1}} is completed.  $\Box$
\end{proof}

\section{Data-driven MHE and Robust stability}
\label{sec3}
In this section, we construct the robust data-driven {Moving Horizon Estimation (MHE)} optimization problem for linear autonomous systems with offsets based on the historical state/output trajectory and provide the \textbf{Algorithm \ref{algorithm1}}. Subsequently, we give the robust stability of \textbf{Algorithm \ref{algorithm1}} with the $M$-step Lyapunov function\cite{schiller2023lyapunov}.

\subsection{Data-driven MHE Algorithm}
\label{sec3_1}
For the linearized dynamic model \eqref{Eq:LDAS}, the traditional MHE problem \cite{rao2001constrained,schiller2023lyapunov} with horizon length $M_t = \min\{t,M\}$ and $M \in \mathbb{I}_{\geq 1}$ is
\begin{subequations}
	\label{opti1}
	\begin{align}
			&\min_{\hat{\boldsymbol{x}}(t-M_t|t),\hat{\boldsymbol{w}}(\cdot|t),\hat{\boldsymbol{v}}(\cdot|t)}  J_{L}(\hat{\boldsymbol{x}}(t-M_t|t),\hat{\boldsymbol{w}}(\cdot|t),\hat{\boldsymbol{v}}(\cdot|t),t)  \\
			&\text{s.t.}  \ 
			\hat{\boldsymbol{x}}(j \! + \! 1|t) \! = \! \boldsymbol{A}_{\tilde{\boldsymbol{x}}} \hat{\boldsymbol{x}}(j|t) + 
			\hat{\boldsymbol{w}}(j|t), \  j  \in   \mathbb{I}_{[t-M_t,t-1]}, \label{opti1_a}\\
			&\ \quad \boldsymbol{y}(j) = \boldsymbol{C}_{\tilde{\boldsymbol{x}}} \hat{\boldsymbol{x}}(j|t) + \hat{\boldsymbol{v}}(j|t), \  j  \in   \mathbb{I}_{[t-M_t,t-1]} \label{opti1_b}, \\
			&\ \quad \hat{\boldsymbol{w}}(j|t)  \in  \mathbb{W}, \ \hat{\boldsymbol{v}}(j|t)  \in  \mathbb{V}, \  j  \in  \mathbb{I}_{[t-M_t,t-1]}, \label{opti1_c} \\
			&\ \quad \hat{\boldsymbol{x}}(j|t) \in  \mathbb{X}, \ j  \in  \mathbb{I}_{[t-M_t,t]}, \label{opti1_d}
	\end{align}
\end{subequations}
where the cost function of the traditional MHE optimization problem is
\begin{align}
	\label{Eq:OBJ}
	&J_{L}(\hat{\boldsymbol{x}}(t-M_t|t),\hat{\boldsymbol{w}}(\cdot|t),\hat{\boldsymbol{v}}(\cdot|t),t) \notag \\
	& \qquad = \|\hat{\boldsymbol{x}}(t - M_t|t)  -  \bar{\boldsymbol{x}}(t - M_t) \|_{\boldsymbol{P}}^2  \notag \\
	& \qquad \quad  +  \sum_{j=1}^{M_t} \left( \|\hat{\boldsymbol{w}}(t -  j|t) \|_{\boldsymbol{Q}}^2 + \|\hat{\boldsymbol{v}}(t  -  j|t) \|_{\boldsymbol{R}}^2 \right), 
\end{align}
and	$\bar{\boldsymbol{x}}(t - M_t)$ is the prior information for the state ${\boldsymbol{x}}(t - M_t)$. $\hat{\boldsymbol{x}}(\cdot|t) = \{ \hat{\boldsymbol{x}}(j|t) \}_{j=t-M_t}^{j=t}$, $\hat{\boldsymbol{w}}(\cdot|t) = \{\hat{\boldsymbol{w}}(j|t)\}_{j=t-M_t}^{j=t-1}$ and $\hat{\boldsymbol{v}}(\cdot|t) = \{\hat{\boldsymbol{v}}(j|t)\}_{j=t-M_t}^{j=t-1}$ {denote} a sequence of $M_t$ optimized state, disturbance and noise estimates, respectively. $\boldsymbol{P}$, $\boldsymbol{Q}$ and $\boldsymbol{R}$ are the weight matrices of prior, disturbance, and noise, respectively. 
$\mathbb{X}$, $\mathbb{W}$ and $\mathbb{V}$ are state and noise constraint sets, respectively, if given.

By leveraging the equivalence between Willems' fundamental lemma and the linearized dynamic model \eqref{Eq:LDAS}, the data-driven optimization form \eqref{opti1} can be obtained by replacing the dynamic model with Willems' fundamental lemma and $\boldsymbol{P} = 2\rho^{M_t}$, $\boldsymbol{Q} = \boldsymbol{0}$, $\boldsymbol{R} = \rho^{j-1} \mu$.

As stated in \textit{Lemma \ref{Prop:1}} and Ref.\cite{wolff2021data}, noise-free system state/output historical data is required to realize data-driven estimation. In eddy current de-tumbling, the historical state data can be obtained by visual-based systems such as LIght Detection And Ranging (LIDAR) sensors, monocular and stereo cameras, etc \cite{cassinis2019review}. In this paper, the historical data is defined as $\{\boldsymbol{x}_{[-T, 0]},\boldsymbol{y}_{[-T, -1]}\}$. 

At time $t \in \mathbb{I}_{\geq 1}$, the proposed Data-driven MHE (Dd-MHE) optimization problem is
\begin{subequations}
	\label{opti1}
	\begin{align}
			&\min_{\hat{\boldsymbol{x}}(\cdot|t),\boldsymbol{\alpha}(t),\hat{\boldsymbol{v}}(\cdot|t)} \  J_{L}(\hat{\boldsymbol{x}}(t-M_t|t),\hat{\boldsymbol{v}}(\cdot|t),t)  \\
			&\text{s.t.} \!
			\begin{bmatrix}
				\boldsymbol{H}_{M_t}\left(\boldsymbol{y}_{[-T, -1]}\right) \\
				\boldsymbol{H}_{M_t+1}\left({\boldsymbol{x}}_{[-T, 0]}\right) \\
				\boldsymbol{1}^{\top}_{T-M_t+1}
			\end{bmatrix} \! \! \boldsymbol{\alpha}(t) \! = \! \!
			\begin{bmatrix}
				\boldsymbol{y}_{[t-M_t, t-1]} - \hat{\boldsymbol{v}}_{[-M_t,-1]}(t) \\
				\hat{\boldsymbol{x}}_{[-M_t,0]}(t)  \\
				1
			\end{bmatrix}\! \!, \label{opti1:1}  \\
			& \quad \ \hat{\boldsymbol{x}}(j|t)  \in \mathbb{X},\ j  \in  \mathbb{I}_{[t-M_t,t]}, \label{opti1:2}\\  
			& \quad \ \hat{\boldsymbol{v}}(j|t)  \in \mathbb{V},\ j  \in   \mathbb{I}_{[t-M_t,t-1]}, \label{opti1_c}
	\end{align}
\end{subequations}
where the cost function of this optimization problem is
\begin{align}
	\label{Eq:OBJ}
	J_{L}(\hat{\boldsymbol{x}}(t \! - \! M_t|t),\hat{\boldsymbol{v}}(\cdot|t),t)=& 2\rho^{M_t}\|\hat{\boldsymbol{x}}(t \! - \! M_t|t) \! - \! \bar{\boldsymbol{x}}(t \! - \! M_t) \|^2 \notag \\
	+ & \sum_{j=1}^{M_t}\rho^{j-1} \mu \|\hat{\boldsymbol{v}}(t-j|t) \|^2, 
\end{align}
and $\boldsymbol{y}_{[t-M_t, t-1]}$ {denotes} the measurement output sequence at time interval $[t-M_t, t-1]$. $\hat{\boldsymbol{x}}_{[-M_t,0]}(t) = [\hat{\boldsymbol{x}}(t-M_t|t)^\top,\dots,\hat{\boldsymbol{x}}(t|t)^\top ]^\top$ and $\hat{\boldsymbol{v}}_{[-M_t,-1]}(t) = [\hat{\boldsymbol{v}}(t-M_t|t)^\top,\dots,\hat{\boldsymbol{v}}(t-1|t)^\top ]^\top$ {denote} a stack of the optimized state and noise, respectively.
The parameters $\rho$ and $\mu$ are set to ensure robust stability, which will be described in detail in the subsection \ref{Robust Stability}{, and the horizon length $M$ is limited to satisfying the inequality \eqref{def:kappa}.}
Therefore, we give the following Dd-MHE algorithm for the linearized dynamic model \eqref{Eq:LDAS}, as shown in \textbf{Algorithm \ref{algorithm1}}.

{Note that in \textbf{Algorithm 1}, when $t \leq M$, $\bar{\boldsymbol{x}}(t-M_t) = \bar{\boldsymbol{x}}(0)$ where $\bar{\boldsymbol{x}}(0)$ is the prior information of state $\boldsymbol{x}$ at time $t=0$ and can be taken as a point in the state constraint set $\mathbb{X}$. When $t > M$, $\bar{\boldsymbol{x}}(t-M_t) = \hat{\boldsymbol{x}}(t-M_t)$, where $\hat{\boldsymbol{x}}(t-M_t)$ is the estimated value of ${\boldsymbol{x}}(t-M_t)$. This choice of updating prior information $\bar{\boldsymbol{x}}(t-M_t)$ is typically called filtering update \cite{schiller2023lyapunov,rawlings2017model}.
\begin{algorithm}[]  
	\label{algorithm1}
	\caption{Dd-MHE for the system \eqref{Eq:LDAS}}
	\textbf{Data Collection:} Collect historical data $\{\boldsymbol{x}_{[-T, 0]}$, $\boldsymbol{y}_{[-T, -1]}\}$ through sensors and ensure that the data satisfy the rank condition \eqref{rank:H1}.\\
	\textbf{Data Preprocessing:} 
	Execute \textbf{Algorithm \ref{algorithm2}} and obtain $\rho_0$ and $\mu_0$; 
	Given parameters $\rho \in (\rho_0, 1)$, $\mu > \mu_0$ and $M$ that satisfies \eqref{def:kappa}; Given the constraint sets $\mathbb{X}$ and $\mathbb{V}$, and the priori estimate state $\bar{\boldsymbol{x}}(0)$ of \eqref{opti1}; Given the final time $T_{f}$.\\
	\textbf{State Estimation:}\\
	\For{$t \leq T_{f}$}{
		\If{$t \leq M$}{
			Create the optimization problem \eqref{opti1} with $M_t = t$\;
			Let $\bar{\boldsymbol{x}}(t-M_t) = \bar{\boldsymbol{x}}(0)$ and given the measurement output $\boldsymbol{y}_{[t-M_t, t-1]}$ to \eqref{opti1}\;
			Solve \eqref{opti1} and obtain $\hat{\boldsymbol{x}}^*(t|t)$;
		}
		\If{$M < t \leq T_f$}{
			Let $\bar{\boldsymbol{x}}(t-M_t) = \hat{\boldsymbol{x}}(t-M_t)$ and given the measurement output $\boldsymbol{y}_{[t-M_t, t-1]}$ to \eqref{opti1}\;
			Solve \eqref{opti1} and obtain $\hat{\boldsymbol{x}}^*(t|t)$\;
		}
		$\hat{\boldsymbol{x}}(t) = \hat{\boldsymbol{x}}^*(t|t)$; }
\end{algorithm}
{
	\begin{remark}\label{remark 1}
		It is worth noting that the optimization problem \eqref{opti1} does not include the disturbance $\boldsymbol{w}(t)$. The reason is that it is difficult to construct the $M$-step Lyapunov function in \textit{Theorem \ref{algorithm1}} when the disturbance $\boldsymbol{w}(t)$ is included, so that the robust stability of Dd-MHE cannot be deduced theoretically. In addition, if appropriate parameters can be determined, including the disturbance $\boldsymbol{w}(t)$ will improve the performance of Dd-MHE and can cancel \textit{Assumption \ref{assum2}} of this paper. It is an interesting issue to include the disturbance $\boldsymbol{w}(t)$ and choose the appropriate parameters in the future.
	\end{remark}
}
{
	\begin{remark}\label{remark 2}
		In eddy current de-tumbling, the chaser stays close to the target for a long time. For example, the distance to the target surface is about $1 \ \rm{(m)}$ \cite{liu2022robust}. However, the measurement of target's angular velocity based on LIDAR or optical encoder has been a difficult problem when the relative distance is short and the target rotates at high speed\cite{long2022monocular,zhang2022monocular}. 		
		Therefore, the historical data is obtained by LIDAR or optical encoder within their effective range. Then, \textbf{Algorithm \ref{algorithm1}} is used to estimate the target's angular velocity when the distance is close and the target rotates at high speed. In addition, historical data that does not rely on ground truth angular velocity of the target are better, but this issue is still under study. In the latest research, Wolff et al. \cite{10453955} solve the problem that the ground truth data contains noise but still relies on the ground truth historical data. We hope to solve the dependence of this algorithm on ground truth historical data in future research.
	\end{remark}
}

\subsection{Robust Stability}
\label{Robust Stability}
In order to illustrate the robust stability of Dd-MHE, we introduce the time-discounted robust stability \cite{knufer2020time} of the following convolution sum form. Let $\boldsymbol{e}(t) = \boldsymbol{x}(t) - \hat{\boldsymbol{x}}(t)$ and $\bar{\boldsymbol{e}}(0) = \boldsymbol{x}(0) - \bar{\boldsymbol{x}}(0)$, then we have the following definition.
\begin{definition}({Def. 1 of \cite{knufer2018robust}}, Def. 2.3 of\cite{allan2021nonlinear}).
	A state estimator is robustly globally exponentially stable (RGES) if there exist $\lambda_x,\lambda_w,\lambda_v \in (0,1)$ and $c_x,c_w,c_v > 0$, such that the resulting state estimate $\hat{\boldsymbol{x}}(t)$ satisfies
	\begin{align}
		\|\boldsymbol{e}(t) \|  \leq  c_x  \lambda_x^t \|  \bar{\boldsymbol{e}}(0) \|  +   \sum_{j = 0}^{t-1}  c_w \lambda_w^{t  -  j  -  1} \|   \boldsymbol{w}(j)  \|
		 \notag \\
		 +   \sum_{j = 0}^{t-1} c_v \lambda_v^{t  -  j  -  1} \|   \boldsymbol{v}(j)  \|,
	\end{align}
	for all $t \in \mathbb{I}_{\geq 0}$, all initial conditions $\boldsymbol{x}(0),\bar{\boldsymbol{x}}(0) \in \mathbb{X}$, and every system trajectory 
	$\{\boldsymbol{x}(t),\boldsymbol{y}(t),\boldsymbol{w}(t),\boldsymbol{v}(t) \}_{t=0}^{t=\infty}$.
\end{definition}

Note that the generalized definition of RGES is the convolution maximum form, that is, $\|\boldsymbol{e}(t) \| \leq \beta_x(\| \bar{\boldsymbol{e}}(0) \|,t) \oplus \max_{j \in \mathbb{I}_{[0,t-1]}}\beta_w(\| \boldsymbol{w}(t) \|,t-j-1) \oplus \max_{j \in \mathbb{I}_{[0,t-1]}} \beta_v(\| \boldsymbol{v}(t) \|,t-j-1)$ with $\beta_x, \beta_w, \beta_v \in \mathcal{KL}$. However, the two forms are equivalent for the exponential stable case, given by Proposition 3.13 of \cite{allan2021robust}. Therefore, for simplicity, we directly use the above convolution sum form.
\begin{remark}
	The definition of robust stability in the time-discounted form, also called as convolution maximization form in \cite{allan2021nonlinear}, is a new tool for dealing with the robust stability of more general estimators, which is discussed in detail in \cite{knufer2020time,allan2021nonlinear,allan2021robust}. Compared with the asymptotic gain form, as in Def. 3 of \cite{allan2019moving}, the robust stability of the time-discounted form can ensure that the robustness will not deteriorate with the increase of the horizon length, etc., as discussed in the abstract of \cite{allan2021robust}. 
\end{remark}

To prove RGES, we give the following assumption as:
\begin{assumption}
	\label{assum:1}
	The matrix pair ($\boldsymbol{A}_{\tilde{\boldsymbol{x}}},\boldsymbol{C}_{\tilde{\boldsymbol{x}}}$) of system \eqref{Eq:LDAS} is observable.
\end{assumption}

Under this assumption, we have the following proposition:
\begin{proposition}
	\label{Prop:2}
	For any two trajectories of system \eqref{Eq:LDAS},  $\{\boldsymbol{x}_1(t),\boldsymbol{y}_1(t),\boldsymbol{w}_1(t),\boldsymbol{v}_1(t),\boldsymbol{e}_1,\boldsymbol{r}_1\}$ and $\{\boldsymbol{x}_2(t),\boldsymbol{y}_2(t),\boldsymbol{w}_2(t),\boldsymbol{v}_2(t),\boldsymbol{e}_2,\boldsymbol{r}_2\}$, there exist $\rho \in (0,1)$ and $\mu > 0$ such that 
	\begin{align}
		\label{norm of delta}
		\| \boldsymbol{x}_{\Delta}(t+1) \|^2 \leq&  \rho \| \boldsymbol{x}_{\Delta}(t) \|^2 + \mu \left(\|\boldsymbol{y}_{\Delta}(t)\|^2 + \| \boldsymbol{r}_{\Delta}\|^2 \right. \notag \\
		& \left. + \| \boldsymbol{v}_{\Delta}(t) \|^2 + \|\boldsymbol{e}_{\Delta}\|^2 + \|\boldsymbol{w}_{\Delta}(t)\|^2 \right),
	\end{align}
	where $\boldsymbol{x}_{\Delta}(t) = \boldsymbol{x}_1(t) - \boldsymbol{x}_2(t)$, $\boldsymbol{y}_{\Delta}(t) = \boldsymbol{y}_1(t) - \boldsymbol{y}_2(t)$, $\boldsymbol{w}_{\Delta}(t) = \boldsymbol{w}_1(t) - \boldsymbol{w}_2(t)$, $\boldsymbol{v}_{\Delta}(t) = \boldsymbol{v}_1(t) - \boldsymbol{v}_2(t)$, $\boldsymbol{e}_{\Delta} = \boldsymbol{e}_1 - \boldsymbol{e}_2$, $\boldsymbol{r}_{\Delta} = \boldsymbol{r}_1 - \boldsymbol{r}_2$.
\end{proposition}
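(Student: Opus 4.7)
The plan is to reduce the claim to a standard observer-error Lyapunov bound enabled by Assumption \ref{assum:1}. First I would subtract the two realizations of \eqref{Eq:LDAS}. Since the linearization matrices $\boldsymbol{A}_{\tilde{\boldsymbol{x}}}$ and $\boldsymbol{C}_{\tilde{\boldsymbol{x}}}$ are shared by both trajectories, differencing leaves the affine incremental system
$$\boldsymbol{x}_{\Delta}(t+1) = \boldsymbol{A}_{\tilde{\boldsymbol{x}}}\boldsymbol{x}_{\Delta}(t) + \boldsymbol{e}_{\Delta} + \boldsymbol{w}_{\Delta}(t), \qquad \boldsymbol{y}_{\Delta}(t) = \boldsymbol{C}_{\tilde{\boldsymbol{x}}}\boldsymbol{x}_{\Delta}(t) + \boldsymbol{r}_{\Delta} + \boldsymbol{v}_{\Delta}(t).$$
Next, I would invoke observability of $(\boldsymbol{A}_{\tilde{\boldsymbol{x}}},\boldsymbol{C}_{\tilde{\boldsymbol{x}}})$ to pick an output-injection gain $\boldsymbol{L}$ so that $\boldsymbol{A}_{\tilde{\boldsymbol{x}}}-\boldsymbol{L}\boldsymbol{C}_{\tilde{\boldsymbol{x}}}$ is Schur, and substitute $\boldsymbol{C}_{\tilde{\boldsymbol{x}}}\boldsymbol{x}_{\Delta}(t) = \boldsymbol{y}_{\Delta}(t)-\boldsymbol{r}_{\Delta}-\boldsymbol{v}_{\Delta}(t)$ into the state update to rewrite
$$\boldsymbol{x}_{\Delta}(t+1) = (\boldsymbol{A}_{\tilde{\boldsymbol{x}}}-\boldsymbol{L}\boldsymbol{C}_{\tilde{\boldsymbol{x}}})\boldsymbol{x}_{\Delta}(t) + \boldsymbol{L}\boldsymbol{y}_{\Delta}(t) - \boldsymbol{L}\boldsymbol{r}_{\Delta} - \boldsymbol{L}\boldsymbol{v}_{\Delta}(t) + \boldsymbol{e}_{\Delta} + \boldsymbol{w}_{\Delta}(t).$$

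With this observer-like decomposition in hand, I would introduce $\boldsymbol{P}\succ 0$ and $\bar\rho\in(0,1)$ satisfying the discrete Lyapunov inequality $(\boldsymbol{A}_{\tilde{\boldsymbol{x}}}-\boldsymbol{L}\boldsymbol{C}_{\tilde{\boldsymbol{x}}})^{\top}\boldsymbol{P}(\boldsymbol{A}_{\tilde{\boldsymbol{x}}}-\boldsymbol{L}\boldsymbol{C}_{\tilde{\boldsymbol{x}}})\preceq\bar\rho\,\boldsymbol{P}$, then apply the $\boldsymbol{P}$-weighted Young inequality $(\boldsymbol{a}+\boldsymbol{b})^{\top}\boldsymbol{P}(\boldsymbol{a}+\boldsymbol{b}) \leq (1+\epsilon)\boldsymbol{a}^{\top}\boldsymbol{P}\boldsymbol{a} + (1+1/\epsilon)\boldsymbol{b}^{\top}\boldsymbol{P}\boldsymbol{b}$ with a small $\epsilon>0$. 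Expanding the five-signal bundle on the right through $\|\sum_{i=1}^{5}\boldsymbol{a}_{i}\|^{2} \leq 5\sum_{i=1}^{5}\|\boldsymbol{a}_{i}\|^{2}$ separates $\boldsymbol{y}_{\Delta}$, $\boldsymbol{r}_{\Delta}$, $\boldsymbol{v}_{\Delta}$, $\boldsymbol{e}_{\Delta}$, $\boldsymbol{w}_{\Delta}$, and bounding the $\boldsymbol{L}$-pre-multiplied terms by $\|\boldsymbol{L}\|$ produces a contraction of $\|\boldsymbol{x}_{\Delta}\|_{\boldsymbol{P}}^{2}$. Translating between the weighted and Euclidean norms via $\lambda_{min}(\boldsymbol{P})\|\boldsymbol{x}\|^{2} \leq \|\boldsymbol{x}\|_{\boldsymbol{P}}^{2} \leq \lambda_{max}(\boldsymbol{P})\|\boldsymbol{x}\|^{2}$ then yields \eqref{norm of delta} with $\rho=(1+\epsilon)\bar\rho\,\lambda_{max}(\boldsymbol{P})/\lambda_{min}(\boldsymbol{P})$ and a single $\mu>0$ that absorbs the remaining constants.

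The delicate step will be forcing $\rho$ into $(0,1)$: the $(1+\epsilon)$ factor is benign, but the product $\bar\rho\,\lambda_{max}(\boldsymbol{P})/\lambda_{min}(\boldsymbol{P})$ depends jointly on $\boldsymbol{L}$ and on the stabilizing $\boldsymbol{P}$ and is not automatically below one. I would exploit the freedom granted by Assumption \ref{assum:1} to place the spectrum of $\boldsymbol{A}_{\tilde{\boldsymbol{x}}}-\boldsymbol{L}\boldsymbol{C}_{\tilde{\boldsymbol{x}}}$ sufficiently deep inside the unit disk, which drives $\bar\rho$ small while keeping the condition number of the associated Lyapunov solution $\boldsymbol{P}$ near one, so that their product falls strictly below one and the required $\rho\in(0,1)$, $\mu>0$ can be extracted.
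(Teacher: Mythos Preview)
Your overall strategy---subtract the two trajectories, inject the output through a gain $\boldsymbol{L}$ so that the closed-loop matrix is stable, then bound the resulting sum by a quadratic inequality---is exactly the paper's route. The paper, however, executes it more directly: it stays in the Euclidean norm throughout and applies the six-term inequality $\big\|\sum_{i=1}^{6}\boldsymbol{a}_i\big\|^2\le 6\sum_{i=1}^{6}\|\boldsymbol{a}_i\|^2$ to the decomposition in one shot, obtaining simply $\rho=6\,\lambda_{max}\!\big(\boldsymbol{A}_{\boldsymbol{L}}^{\top}\boldsymbol{A}_{\boldsymbol{L}}\big)$ and $\mu=6\,\lambda_{max}(\boldsymbol{L}^{\top}\boldsymbol{L})$ with $\boldsymbol{A}_{\boldsymbol{L}}=\boldsymbol{A}_{\tilde{\boldsymbol{x}}}+\boldsymbol{L}\boldsymbol{C}_{\tilde{\boldsymbol{x}}}$. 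No Lyapunov matrix $\boldsymbol{P}$, no Young split, no norm conversion is needed; the whole question of getting $\rho<1$ reduces to the single spectral-norm constraint $6\|\boldsymbol{A}_{\boldsymbol{L}}\|^{2}<1$, for which the paper supplies a separate data-driven pole-placement routine.

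Your detour through a $\boldsymbol{P}$-weighted norm is where the argument becomes fragile. The claim in your last paragraph---that placing the eigenvalues of $\boldsymbol{A}_{\tilde{\boldsymbol{x}}}-\boldsymbol{L}\boldsymbol{C}_{\tilde{\boldsymbol{x}}}$ deep in the disk drives $\bar\rho$ small \emph{while keeping the condition number of $\boldsymbol{P}$ near one}---is not true in general. For observable pairs with a large off-diagonal entry that output injection cannot touch (e.g.\ $\boldsymbol{A}=\left[\begin{smallmatrix}0&c\\0&0\end{smallmatrix}\right]$, $\boldsymbol{C}=[\,1\ 0\,]$ with $c$ large), one can check that $\bar\rho\cdot\lambda_{max}(\boldsymbol{P})/\lambda_{min}(\boldsymbol{P})$ is bounded below by a constant of order $c^{2}$ for \emph{every} admissible $(\boldsymbol{L},\boldsymbol{P})$, so your proposed $\rho$ cannot be forced below one. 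The paper does not resolve this obstruction either---it simply \emph{imposes} the spectral constraint and hands the existence of a suitable $\boldsymbol{L}$ off to an algorithm---so you are not missing an idea that the paper supplies; but you should not claim that observability alone guarantees the last step. Dropping the $\boldsymbol{P}$-weighting and arguing directly in the Euclidean norm, as the paper does, at least makes the required hypothesis ($6\|\boldsymbol{A}_{\boldsymbol{L}}\|^2<1$) explicit rather than hidden behind a condition-number heuristic.
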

\begin{proof}\label{proof 2}
	According to the superposition property of the system \eqref{Eq:LDAS}, we have
	\begin{align}
			\boldsymbol{x}_{\Delta}(t+1) &= \boldsymbol{A}_{\tilde{\boldsymbol{x}}} \boldsymbol{x}_{\Delta}(t) + \boldsymbol{e}_{\Delta} + \boldsymbol{w}_{\Delta}(t), \label{system:delta1} \\
			\boldsymbol{y}_{\Delta}(t) &= \boldsymbol{C}_{\tilde{\boldsymbol{x}}} \boldsymbol{x}_{\Delta}(t) + \boldsymbol{r}_{\Delta} + \boldsymbol{v}_{\Delta}(t). \label{system:delta2}
	\end{align}
	
	Thus, we can obtain the following equation by giving a matrix $\boldsymbol{L}$:
\begin{align}
		\begin{aligned}
			\boldsymbol{x}\!_{\Delta}(t\!+\!1) \! = \! & \boldsymbol{A}_{\tilde{\boldsymbol{x}}} \boldsymbol{x}_{\Delta}(t) + \boldsymbol{e}_{\Delta} + \boldsymbol{w}_{\Delta}(t)   \\
			&\! - \!  \boldsymbol{L} \left( \boldsymbol{y}_{\Delta}(t) \! - \! \boldsymbol{C}_{\tilde{\boldsymbol{x}}} \boldsymbol{x}_{\Delta}(t) \! - \! \boldsymbol{r}_{\Delta} \! - \! \boldsymbol{v}_{\Delta}(t) \right)  \\
			\! =\! & \boldsymbol{A}\!_{\boldsymbol{L}} \boldsymbol{x}\!_{\Delta} \! (t)\!  + \! \boldsymbol{e}\!_{\Delta} \! \!  + \! \boldsymbol{w}\!_{\Delta} \! (t) \! \! - \! \! \boldsymbol{L} \! \left( \boldsymbol{y}\!_{\Delta} \! (t)\! \! - \! \boldsymbol{r}\!_{\Delta}  \! \! - \! \boldsymbol{v}\!_{\Delta} \! (t) \right) \!,
		\end{aligned}
	\end{align}
	{where $\boldsymbol{A}_{\boldsymbol{L}} = \boldsymbol{A}_{\tilde{\boldsymbol{x}}} + \boldsymbol{L} \boldsymbol{C}_{\tilde{\boldsymbol{x}}}$.}
	Combining the above results with the Cauchy-Schwarz inequality, $\boldsymbol{x}_{\Delta}(t+1)$ satisfies
	\begin{align}\label{def:rho mu}
			\| \boldsymbol{x}_{\Delta}(t+1) \|^2 \leq &  6\left( \|\boldsymbol{A}_{\boldsymbol{L}} \boldsymbol{x}_{\Delta}(t) \|^2 + \|\boldsymbol{e}_{\Delta}\|^2 + \|\boldsymbol{w}_{\Delta}(t)\|^2 \right. \notag \\ 
			&  \left. + \|\boldsymbol{L}\boldsymbol{y}_{\Delta}(t)\|^2
			+ \| \boldsymbol{L}\boldsymbol{r}_{\Delta}\|^2 + \| \boldsymbol{L}\boldsymbol{v}_{\Delta}(t) \|^2 \right) \notag \\
			\leq&  \rho \| \boldsymbol{x}_{\Delta}(t) \|^2 + \mu \left(\|\boldsymbol{y}_{\Delta}(t)\|^2 + \| \boldsymbol{r}_{\Delta}\|^2 \right. \notag \\
			& \left. + \| \boldsymbol{v}_{\Delta}(t) \|^2 + \|\boldsymbol{e}_{\Delta}\|^2 + \|\boldsymbol{w}_{\Delta}(t)\|^2 \right),
	\end{align}
	where $\mu$ and $\rho$ are positive numbers satisfied the constraints:
	\begin{align}\label{constraint rho mu}
		\begin{aligned}
			6\lambda_{max}(\boldsymbol{L}^\top \boldsymbol{L} ) \leq \mu , \quad
			6\lambda_{max}(\boldsymbol{A}_{\boldsymbol{L}}^\top \boldsymbol{A}_{\boldsymbol{L}}) \leq \rho < 1.
		\end{aligned}
	\end{align}
	Thus, \eqref{norm of delta} is derived.  $\Box$
\end{proof}

Note that the pole placement design guarantees that $\rho \in (0,1)$ and can be set as any small value, but correspondingly, the value of $\mu$ will increase. Since the model considered in this paper is unknown, $\mu$ should be taken as large as possible to ensure \eqref{norm of delta}.

{
	\begin{remark}
		In the above proof, we require the parameters $\mu$ and $\rho$ to satisfy \eqref{constraint rho mu}. However, since the system is unknown, the lower bound in \eqref{constraint rho mu} is also unknown. The \textbf{Algorithm \ref{algorithm2}} in Appendix \ref{sec:Appendix A} can calculate the parameters $\rho_0$ and $\mu_0$ that meet the constraints $\rho_0 := 6\lambda_{max}(\boldsymbol{A}_{\boldsymbol{L}}^\top \boldsymbol{A}_{\boldsymbol{L}}) \in (0,1)$ and $\mu_0 := 6\lambda_{max}(\boldsymbol{L}^\top \boldsymbol{L} )$. Thus, \eqref{constraint rho mu} holds for $\mu \geq \mu_0$ and $1 > \rho \geq \rho_0$.
	\end{remark}
}

To ensure the feasibility of the subsequent proof, we require the following assumption:
\begin{assumption}
	\label{assum2}
	For all $t \in \mathbb{I}_{\geq 0}$, $\boldsymbol{x}(t+1) - \boldsymbol{w}(t) \in \mathbb{X}$.
\end{assumption}

This assumption is equivalent to $\boldsymbol{A}_{\tilde{\boldsymbol{x}}} \boldsymbol{x}(t) + \boldsymbol{e}_{\tilde{\boldsymbol{x}}} \in \mathbb{X}$ and makes the state constraint satisfied for all $\boldsymbol{x}(t)$ with $t \in \mathbb{I}_{\geq 0}$.

In what follows, we combine \textit{Lemma \ref{Prop:2}} with \textbf{Theorem 1} of Ref.\cite{schiller2023lyapunov,9805818} to give the conclusion that $W_{\delta}(\hat{\boldsymbol{x}}(t),\boldsymbol{x}(t)) = \|\boldsymbol{x}(t) - \hat{\boldsymbol{x}}(t) \|^2$ is the \textit{M}-step Lyapunov function of Dd-MHE.

\begin{theorem}($M$-step Lyapunov function for data-driven MHE of {Algorithm 1}).
	\label{theorem 1}
	Let Assumption \ref{assum:1} hold. Then, for all $t \in \mathbb{I}_{\geq 0}$, there exist $\rho \in (0,1)$ and $\mu > 0$ such that the state estimate $\hat{\boldsymbol{x}}$ by {Algorithm 1} satisfies 
	\begin{align}\label{theorem:eq 1}
		\|\boldsymbol{x}(t) - \hat{\boldsymbol{x}}(t) \|^2 \leq 4\rho^{M_t}\| \boldsymbol{x}(t-M_t) - \bar{\boldsymbol{x}}(t-M_t) \|^2 \notag \\
		+  \sum_{j=1}^{M_t}\rho^{j-1} \mu( \| \boldsymbol{w}(t-j) \|^2 + 2\| \boldsymbol{v}(t-j)\|^2).
	\end{align}
\end{theorem}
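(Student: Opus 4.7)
The plan is to follow the standard moving horizon estimation Lyapunov template, combining three ingredients: feasibility of the true trajectory as a candidate for the optimization, the one-step contraction from \textit{Proposition \ref{Prop:2}}, and optimality of the Dd-MHE solution. First I would verify that $(\hat{\boldsymbol{x}}(j|t), \hat{\boldsymbol{v}}(j|t)) = (\boldsymbol{x}(j), \boldsymbol{v}(j))$ over $j \in \mathbb{I}_{[t-M_t,t]}$ is feasible in \eqref{opti1}. The Willems constraint \eqref{opti1:1} can be satisfied with some $\boldsymbol{\alpha}(t)$ by \textit{Lemma \ref{Prop:1}}, since the historical data satisfies the rank condition \eqref{rank:H1} and $(\boldsymbol{x}, \boldsymbol{y} - \boldsymbol{v})$ is a trajectory of the offset-linearized system \eqref{Eq:LDAS}. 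The state constraint \eqref{opti1_c} on $\boldsymbol{x}(j)$ is provided by \textit{Assumption \ref{assum2}}, and the noise constraint \eqref{opti1_d} on $\boldsymbol{v}(j)$ holds by assumption. This yields the bound
\begin{align*}
J_L^\ast \leq 2\rho^{M_t}\|\boldsymbol{x}(t-M_t) - \bar{\boldsymbol{x}}(t-M_t)\|^2 + \sum_{j=1}^{M_t}\rho^{j-1}\mu\|\boldsymbol{v}(t-j)\|^2.
\end{align*}

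Next I would apply \textit{Proposition \ref{Prop:2}} to the difference between the true trajectory and the optimal estimated trajectory $\{\hat{\boldsymbol{x}}^\ast(\cdot|t), \hat{\boldsymbol{v}}^\ast(\cdot|t)\}$. Both trajectories are generated by the same system \eqref{Eq:LDAS}, so $\boldsymbol{e}_\Delta = \boldsymbol{0}$ and $\boldsymbol{r}_\Delta = \boldsymbol{0}$. Moreover, since \eqref{opti1:1} enforces $\boldsymbol{C}_{\tilde{\boldsymbol{x}}}\hat{\boldsymbol{x}}^\ast(j|t) + \boldsymbol{r}_{\tilde{\boldsymbol{x}}} = \boldsymbol{y}(j) - \hat{\boldsymbol{v}}^\ast(j|t)$, choosing the outputs of both trajectories as the measured $\boldsymbol{y}(j)$ gives $\boldsymbol{y}_\Delta = \boldsymbol{0}$ while $\boldsymbol{v}_\Delta = \boldsymbol{v} - \hat{\boldsymbol{v}}^\ast$ and $\boldsymbol{w}_\Delta = \boldsymbol{w}$. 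The one-step contraction then simplifies to $\|\boldsymbol{e}(j+1)\|^2 \leq \rho\|\boldsymbol{e}(j)\|^2 + \mu(\|\boldsymbol{w}(j)\|^2 + \|\boldsymbol{v}(j) - \hat{\boldsymbol{v}}^\ast(j|t)\|^2)$, which I would telescope from $j = t-M_t$ up to $j = t-1$ to obtain
\begin{align*}
\|\boldsymbol{e}(t)\|^2 \leq \rho^{M_t}\|\boldsymbol{x}(t-M_t) - \hat{\boldsymbol{x}}^\ast(t-M_t|t)\|^2 + \sum_{j=1}^{M_t}\rho^{j-1}\mu\bigl(\|\boldsymbol{w}(t-j)\|^2 + \|\boldsymbol{v}(t-j)-\hat{\boldsymbol{v}}^\ast(t-j|t)\|^2\bigr).
\end{align*}

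Finally, I would apply the Cauchy-Schwarz inequality $\|\boldsymbol{a}+\boldsymbol{b}\|^2 \leq 2\|\boldsymbol{a}\|^2 + 2\|\boldsymbol{b}\|^2$ to decompose $\boldsymbol{x}(t-M_t) - \hat{\boldsymbol{x}}^\ast(t-M_t|t)$ around $\bar{\boldsymbol{x}}(t-M_t)$ and $\boldsymbol{v}(t-j) - \hat{\boldsymbol{v}}^\ast(t-j|t)$ around $\boldsymbol{0}$, then recognize the resulting $\hat{\boldsymbol{x}}^\ast$ and $\hat{\boldsymbol{v}}^\ast$ terms as exactly the summands defining $J_L^\ast$, bounding them using $2\rho^{M_t}\|\bar{\boldsymbol{x}}(t-M_t) - \hat{\boldsymbol{x}}^\ast(t-M_t|t)\|^2 \leq J_L^\ast$ and $\sum_{j=1}^{M_t}\rho^{j-1}\mu\|\hat{\boldsymbol{v}}^\ast(t-j|t)\|^2 \leq J_L^\ast$. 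Substituting the optimal-cost upper bound from the first step then yields an inequality of the stated form. The main obstacle will be the accounting at this last step: the repeated use of triangle inequalities and the bounds by $J_L^\ast$ must telescope so that the coefficient on $\rho^{M_t}\|\boldsymbol{x}(t-M_t) - \bar{\boldsymbol{x}}(t-M_t)\|^2$ collapses to $4$ and that on $\|\boldsymbol{v}(t-j)\|^2$ to $2$, rather than to looser constants, which requires choosing the decompositions with some care so that the $-2\rho^{M_t}\|\bar{\boldsymbol{x}} - \hat{\boldsymbol{x}}^\ast\|^2$ residuals from the triangle inequality cancel the extra contribution absorbed through $J_L^\ast$.
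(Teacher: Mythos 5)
Your overall architecture coincides with the paper's: bound the optimal cost $J_L^*$ by a feasible candidate, telescope the one-step contraction of \textit{Proposition \ref{Prop:2}} over the horizon, then apply Cauchy--Schwarz and absorb the optimizer-dependent terms into $J_L^*$. However, two steps as you describe them do not go through.

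First, the true trajectory $(\boldsymbol{x}(j),\boldsymbol{v}(j))$ is not feasible for \eqref{opti1:1}. The Hankel matrices there are built from disturbance-free historical data, so by \textit{Lemma \ref{Prop:1}} any candidate satisfying \eqref{opti1:1} must be a trajectory of the offset system \eqref{Eq:LAS} with zero disturbance, whereas the true states obey $\boldsymbol{x}(j+1)=\boldsymbol{A}_{\tilde{\boldsymbol{x}}}\boldsymbol{x}(j)+\boldsymbol{e}_{\tilde{\boldsymbol{x}}}+\boldsymbol{w}(j)$. The paper instead uses the candidate $\{\boldsymbol{x}(t-M_t),\ \boldsymbol{x}(j+1)-\boldsymbol{w}(j),\dots\}$ with outputs $\boldsymbol{y}(j)-\boldsymbol{v}(j)$ (this is where Assumption \ref{assum2} is needed for \eqref{opti1_c}); since the cost \eqref{Eq:OBJ} penalizes only the initial state and the noise estimates, the candidate cost is unchanged and your bound on $J_L^*$ survives, but the candidate itself must be corrected.

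Second, and more consequentially for the stated constants: your assignment of the second trajectory in \textit{Proposition \ref{Prop:2}} ($\boldsymbol{y}_2=\boldsymbol{y}$, $\boldsymbol{v}_2=\hat{\boldsymbol{v}}^*$) produces the coupled per-step term $\|\boldsymbol{v}(t-j)-\hat{\boldsymbol{v}}^*(t-j|t)\|^2$, which must then be split into $2\|\boldsymbol{v}\|^2+2\|\hat{\boldsymbol{v}}^*\|^2$. Carrying the accounting through, the $\hat{\boldsymbol{v}}^*$ terms appear with coefficient $2\mu$, you are forced to bound them together with the initial-state term by $2J_L^*$, and the final inequality comes out as $6\rho^{M_t}\|\boldsymbol{x}(t-M_t)-\bar{\boldsymbol{x}}(t-M_t)\|^2+\sum_{j=1}^{M_t}\rho^{j-1}\mu(\|\boldsymbol{w}(t-j)\|^2+4\|\boldsymbol{v}(t-j)\|^2)$, not the claimed $4$ and $2$. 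The fix, which is exactly what the paper does, is to set $\boldsymbol{y}_2(t-j)=\boldsymbol{y}(t-j)-\hat{\boldsymbol{v}}^*(t-j|t)$ and $\boldsymbol{v}_2=\boldsymbol{0}$, so that \eqref{norm of delta} delivers $\|\boldsymbol{y}_{\Delta}\|^2=\|\hat{\boldsymbol{v}}^*\|^2$ and $\|\boldsymbol{v}_{\Delta}\|^2=\|\boldsymbol{v}\|^2$ as separate terms with coefficient $\mu$ each; then $2\rho^{M_t}\|\hat{\boldsymbol{x}}^*(t-M_t|t)-\bar{\boldsymbol{x}}(t-M_t)\|^2+\sum_{j}\rho^{j-1}\mu\|\hat{\boldsymbol{v}}^*(t-j|t)\|^2$ is exactly $J_L^*$ with coefficient one, and substituting the candidate-cost bound yields precisely the constants $4$ and $2$. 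You flagged this accounting as the main obstacle but did not resolve it; with your assignment the theorem as stated is not obtained (the looser constants would still yield RGES after adjusting \eqref{def:kappa}, but that is a different statement).
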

\begin{proof}
	The proof is divided into two parts. In the first part, we consider that the solution of {Algorithm 1} is the trajectory of the system \eqref{Eq:LDAS} and give the following inequality 
	\begin{align}\label{ineq:1}
			\| \boldsymbol{x}(t) - \hat{\boldsymbol{x}}^*(t|t) \|^2 \leq& 2\rho^{M_t}\| \boldsymbol{x}(t-M_t) - \bar{\boldsymbol{x}}(t-M_t) \|^2 \notag  \\
			&+ \sum_{j=1}^{M_t}\rho^{j-1} \mu( \| \boldsymbol{w}(t \! - \! j) \|^2 \! + \! \| \boldsymbol{v}(t \! - \! j)\|^2) \notag \\
			&+ J_{L}(\hat{\boldsymbol{x}}^*(t-M_t|t),\hat{\boldsymbol{v}}^*(\cdot|t),t).
	\end{align}
	In the second part, we complete the proof of \textit{Theorem \ref{theorem 1}} by constructing a feasible solution to the optimization problem \eqref{opti1} and inequality \eqref{ineq:1}.
	
	\textbf{Part I:} Let the real trajectory of the system \eqref{Eq:LDAS} be the first trajectory, i.e., $\boldsymbol{x}_1(t) = \boldsymbol{x}(t)$ , $\boldsymbol{y}_1(t)= \boldsymbol{y}(t)$, $\boldsymbol{w}_1(t)=\boldsymbol{w}(t)$, $\boldsymbol{v}_1(t)= \boldsymbol{v}(t)$, $\boldsymbol{e}_1 = \boldsymbol{e}_{\tilde{\boldsymbol{x}}}$, $\boldsymbol{r}_1 = \boldsymbol{r}_{\tilde{\boldsymbol{x}}}$.
	In addition, according to \textit{Lemma \ref{Prop:1}}, $\hat{\boldsymbol{x}}_{[-M_t,0]}^*(t)$ and $\boldsymbol{y}_{[t-M_t, t-1]} - \hat{\boldsymbol{v}}_{[-M_t,-1]}^*(t)$ can be regarded as a trajectory of the system \eqref{Eq:LDAS} without disturbance and noise. Thus, we have the second trajectory as $\boldsymbol{x}_2(t-j) = \hat{\boldsymbol{x}}^*(t-j|t)$ with $j \in \mathbb{I}_{[0,M_t]}$, and $\boldsymbol{y}_2(t-j) = \boldsymbol{y}(t-j) - \hat{\boldsymbol{v}}^*(t-j|t)$, $\boldsymbol{w}_2(t-j)=\boldsymbol{0}$, $\boldsymbol{v}_2(t-j)= \boldsymbol{0}$ with $j \in \mathbb{I}_{[1,M_t]}$, and $\boldsymbol{e}_2 = \boldsymbol{e}_{\tilde{\boldsymbol{x}}}$, $\boldsymbol{r}_2 = \boldsymbol{r}_{\tilde{\boldsymbol{x}}}$.
	
	Substituting the above two trajectories into \textit{Proposition \ref{Prop:2}}, we get the following inequality as
	\begin{subequations}\label{cont:1}
		\begin{align}
			&\|\boldsymbol{e}^*(t|t)  \|^2 \leq \rho \| \boldsymbol{e}^*(t-1|t) \|^2  + \mu \left(\| \hat{\boldsymbol{v}}^*(t-1|t)\|^2 \right. \notag \\
			& \qquad \qquad \qquad \qquad \left. + \| \boldsymbol{v}(t-1)\|^2 +  \| \boldsymbol{w}(t-1) \|^2 \right) \! ,\label{cont:1a} \\
			& {\|\boldsymbol{e}^*(t-1|t)  \|^2 \leq}  
			{\rho \| \boldsymbol{e}^*(t-2|t) \|^2  + \mu \left(\| \hat{\boldsymbol{v}}^*(t-2|t)\|^2 \right.} \notag \\
			& \qquad \qquad \qquad \qquad {\left. + \| \boldsymbol{v}(t-2)\|^2 +  \| \boldsymbol{w}(t-2) \|^2 \right) \! ,} \label{cont:1b}\\
			& \qquad \qquad \qquad \qquad {\vdots} \notag \\
			& {\|\boldsymbol{e}^*(t \! - \! M_t \! + \! 1|t)  \|^2 \leq}  
			{\rho \| \boldsymbol{e}^*(t \! - \! M_t|t) \|^2 \! + \! \mu \! \left(\| \hat{\boldsymbol{v}}^*(t \! - \! M_t|t)\|^2 \right.} \notag \\
			& \qquad \qquad \qquad \qquad {\left.  \! +  \| \boldsymbol{v}(t \! - \! M_t)\|^2 \! + \! \| \boldsymbol{w}(t \! - \! M_t) \|^2 \right) \!,}\label{cont:1c}
		\end{align}
	\end{subequations}
	where $\boldsymbol{e}^*(t-j|t) = \boldsymbol{x}(t-j) - \hat{\boldsymbol{x}}^*(t-j|t)$, for all $j \in \mathbb{I}_{[0,M_t]}$.
	
	Substitute \eqref{cont:1b} into \eqref{cont:1a} and repeat this substitution for $\| \boldsymbol{e}^*(t-2|t) \|^2, \| \boldsymbol{e}^*(t-3|t) \|^2, \cdots, \| \boldsymbol{e}^*(t-M_t + 1|t) \|^2$, then
	\begin{align}
			&\| \boldsymbol{e}^*(t|t) \|^2  
			\leq   \rho^{2}\| \boldsymbol{e}^*(t \! - \! 2|t) \|^2 \notag \\
			& \qquad {+  \rho \mu \left( \| \hat{\boldsymbol{v}}^*(t \! - \! 2|t)\|^2
				\! + \! \| \boldsymbol{v}(t \! - \! 2)\|^2 \! + \! \| \boldsymbol{w}(t \! - \! 2) \|^2  \right)} \notag \\
			& \qquad { +  \mu \left( \| \hat{\boldsymbol{v}}^*(t \! - \! 1|t)\|^2
				\! + \! \| \boldsymbol{v}(t \! - \! 1)\|^2 \! + \! \| \boldsymbol{w}(t \! - \! 1) \|^2  \right)} \notag \\
			& {\leq  \cdots }
			\leq \rho^{M_t} \| \boldsymbol{e}^*(t \! - \! M_t|t) \|^2 \! + \! \sum_{j=1}^{M_t}\rho^{j-1} \mu \left( \| \hat{\boldsymbol{v}}^*(t \! - \! j|t)\|^2 \right. \notag \\
			& \qquad \qquad  \qquad \qquad \left. +  \| \boldsymbol{v}(t \! - \! j)\|^2  + \| \boldsymbol{w}(t \! - \! j) \|^2  \right).
	\end{align}
	
	Using the Cauchy-Schwarz inequality as 
	\begin{align}
		\label{Triangle ineq:2}
		\| \boldsymbol{e}^*(t-M_t|t) \|^2 \leq&  2\| \bar{\boldsymbol{e}}(t-M_t)\|^2 \notag  \\
		& + 2\| \hat{\boldsymbol{x}}^*(t-M_t|t) -  \bar{\boldsymbol{x}}(t-M_t) \|^2,
	\end{align}
	where $\bar{\boldsymbol{e}}(t-j) = \boldsymbol{x}(t-j) - \bar{\boldsymbol{x}}(t-j)$ for all $j \in \mathbb{I}_{[1,M_t]}$, 
	we can continue the derivation to obtain that
	\begin{align}\label{Part I}
			\| \boldsymbol{e}^*(t|t)\|^2 
			\overset{\eqref{Triangle ineq:2}}{\leq}& 2\rho^{M_t}\|\bar{\boldsymbol{e}}(t-M_t) \|^2 \notag \\
			&+ \sum_{j=1}^{M_t}\rho^{j-1} \mu( \| \boldsymbol{w}(t-j) \|^2 + \| \boldsymbol{v}(t-j)\|^2) \notag \\
			&+ 2\rho^{M_t}\| \hat{\boldsymbol{x}}^*(t-M_t|t) - \bar{\boldsymbol{x}}(t-M_t) \|^2 \notag \\
			&+ \sum_{j=1}^{M_t}\rho^{j-1} \mu \| \hat{\boldsymbol{v}}^*(t-j|t)\|^2 \notag  \\
			\overset{\eqref{Eq:OBJ}}{\leq}& 2\rho^{M_t}\| \boldsymbol{x}(t-M_t) - \bar{\boldsymbol{x}}(t-M_t) \|^2 \notag  \\
			&+ \sum_{j=1}^{M_t}\rho^{j-1} \mu( \| \boldsymbol{w}(t-j) \|^2 + \| \boldsymbol{v}(t-j)\|^2) \notag \\
			&+ J_{L}(\hat{\boldsymbol{x}}^*(t-M_t|t),\hat{\boldsymbol{v}}^*(\cdot|t),t).
	\end{align}
	
	\textbf{Part II:} Consider a sequence of state/output data $\boldsymbol{d}_{t} = \{[\boldsymbol{x}(t-M_t), \boldsymbol{x}(t-M_t+1)-\boldsymbol{w}(t-M_t),\dots,\boldsymbol{x}(t)-\boldsymbol{w}(t-1)],[\boldsymbol{y}(t-M_t)-\boldsymbol{v}(t-M_t),\dots,\boldsymbol{y}(t-1)-\boldsymbol{v}(t-1)] \}$. It is obvious that $\boldsymbol{d}_{t}$ is a disturbance-free and noise-free trajectory of the system \eqref{Eq:LDAS} and satisfies \eqref{opti1:1}. Combined with \textit{Assumption \ref{assum2}}, the data sequence $\boldsymbol{d}_{t}$ is a feasible solution to the optimization problem \eqref{opti1}. Thus, we have
	\begin{align}
		\label{OBJ comp}
		J_{L}(\hat{\boldsymbol{x}}^*(t-M_t|t),\hat{\boldsymbol{v}}^*(\cdot|t),t) \leq J_{L}({\boldsymbol{x}}(t-M_t),\boldsymbol{v}_{[t-M_t,t-1]},t).
	\end{align}
	
	Substituting \eqref{OBJ comp} into \eqref{Part I} yields that
	\begin{align}\label{proof:M-step}
			\| {\boldsymbol{e}}^*(t|t) \|^2 \leq& 2\rho^{M_t}\| \boldsymbol{x}(t-M_t) - \bar{\boldsymbol{x}}(t-M_t) \|^2 \notag \\
			&+ \sum_{j=1}^{M_t}\rho^{j-1} \mu( \| \boldsymbol{w}(t-j) \|^2 + \| \boldsymbol{v}(t-j)\|^2) \notag \\
			&+ J_{L}(\boldsymbol{x}(t-M_t),\boldsymbol{v}_{[t-M_t,t-1]},t) \notag \\
			= & 4\rho^{M_t}\| \boldsymbol{x}(t - M_t) - \bar{\boldsymbol{x}}(t-M_t) \|^2 \notag \\
			& + \! \sum_{j=1}^{M_t}\rho^{j-1}  \mu  ( \| \boldsymbol{w}(t \! - \! j) \|^2 \! + \! 2\| \boldsymbol{v}(t \! - \! j)\|^2).
	\end{align}
	
	It completes the proof. $\Box$
\end{proof}

From the above proof, it is clear that the contraction property of the $M$-step Lyapunov function is satisfied by
\begin{align}
	\label{def:kappa}
	\kappa^{M} = 4\rho^{M} < 1.
\end{align}
{Note that since we assume observability, the robust stability of MHE still holds for $\rho = 0$ and $\kappa = 0$. The interested reader is referred to Section 4.3.1 of \cite{rawlings2017model} to complete the proof of robust stability for $\rho = 0$ and $\kappa = 0$.}

Finally, we give RGES of Dd-MHE based on $M$-step Lyapunov function as follows.
\begin{theorem}\label{theorem2}
	Let \textit{Assumptions \ref{assum:1}} and \textit{\ref{assum2}} hold, and the horizon length $M$ satisfies \eqref{def:kappa}. Then, for all $t \in \mathbb{I}_{\geq 0}$, the estimation error $\boldsymbol{x}(t) - \hat{\boldsymbol{x}}(t)$ satisfies
	\begin{align}\label{theorem2:eq}
		&\| \boldsymbol{x}(t) - \hat{\boldsymbol{x}}(t) \|\leq (\sqrt{\kappa})^{t}\| \boldsymbol{x}(0) - \bar{\boldsymbol{x}}(0) \| \notag \\
		&\qquad \qquad + \sum_{j=0}^{t-1}(\sqrt{\kappa})^{t-1-j}\sqrt{\mu}( \| \boldsymbol{w}(j) \| + \sqrt{2}\| \boldsymbol{v}(j)\|),
	\end{align}
	with $\kappa \in (0,1)$ as defined in \eqref{def:kappa} and $\rho, \mu$ in \eqref{def:rho mu}. Thus, Dd-MHE is RGES. 
\end{theorem}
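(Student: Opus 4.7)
The plan is to unroll the $M$-step inequality from Theorem \ref{theorem 1} into a pointwise exponential estimate and then pass from the squared-norm bound to the linear-norm form demanded by the RGES definition by repeated use of the sub-additivity of the square root.

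First, since Algorithm \ref{algorithm1} warm-starts with $\bar{\boldsymbol{x}}(t-M)=\hat{\boldsymbol{x}}(t-M)$ for $t>M$, the bound of Theorem \ref{theorem 1} specialises to the clean recursion
\begin{align*}
\|\boldsymbol{e}(t)\|^{2} \leq \kappa^{M}\|\boldsymbol{e}(t-M)\|^{2} + S(t),
\end{align*}
where $S(t):=\sum_{j=1}^{M}\rho^{j-1}\mu(\|\boldsymbol{w}(t-j)\|^{2}+2\|\boldsymbol{v}(t-j)\|^{2})$ and $\kappa^{M}=4\rho^{M}<1$ by \eqref{def:kappa}. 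I will iterate this $\lfloor t/M\rfloor$ times and use Theorem \ref{theorem 1} at its endpoint case $M_{t}=t$ to bound the residual term $\|\boldsymbol{e}(r)\|^{2}$, with $r=t-\lfloor t/M\rfloor M$, against $\|\bar{\boldsymbol{e}}(0)\|^{2}$.

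The key algebraic step is to splice the two nested geometric factors $\kappa^{iM}$ (between successive $M$-blocks) and $\rho^{j-1}$ (within a block) into a single geometric sequence in $\kappa$. Using $\kappa=4^{1/M}\rho$ one has $\kappa^{iM}\rho^{j-1}=4^{i}\rho^{iM+j-1}\leq \kappa^{iM+j-1}$ whenever $j\geq 1$, so reindexing the double sum by $m=t-1-(iM+j-1)\in\{0,\ldots,t-1\}$ collapses the disturbance contributions into the single geometric sum $\sum_{m=0}^{t-1}\kappa^{t-1-m}\mu(\|\boldsymbol{w}(m)\|^{2}+2\|\boldsymbol{v}(m)\|^{2})$. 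Taking the square root and applying $\sqrt{a+b}\leq\sqrt{a}+\sqrt{b}$ term by term, together with the pointwise estimate $\sqrt{\|\boldsymbol{w}(m)\|^{2}+2\|\boldsymbol{v}(m)\|^{2}}\leq\|\boldsymbol{w}(m)\|+\sqrt{2}\|\boldsymbol{v}(m)\|$, then delivers \eqref{theorem2:eq} with decay rate $\sqrt{\kappa}\in(0,1)$ and yields RGES.

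The main obstacle will be the bookkeeping that identifies the spliced double sum with a single $\kappa$-exponential; this hinges on the tight identity $\kappa^{M}=4\rho^{M}$, together with the monotonicity $4^{i}\leq 4^{(iM+j-1)/M}$ for $j\geq 1$, so that no stray $t$-dependent constant is produced when the two rates are merged. A related subtlety, when $t$ is not a multiple of $M$, is that the iteration terminates at a residual term $\kappa^{\lfloor t/M\rfloor M}\|\boldsymbol{e}(r)\|^{2}$ with $r\in[1,M-1]$; verifying that the resulting prefactor on $\|\bar{\boldsymbol{e}}(0)\|^{2}$ is controlled by $\kappa^{t}$ is the last piece of the puzzle and again rests on the identity $\kappa^{M}=4\rho^{M}$ applied at the base case.
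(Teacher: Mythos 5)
Your proposal is correct and follows essentially the same route as the paper's proof: write $t=kM+l$, iterate the $M$-step bound of Theorem \ref{theorem 1} over the full blocks, invoke the $M_t=t$ case for the residual block, merge the two geometric rates via the identity $\kappa^M=4\rho^M$ (so $\kappa^{iM}\rho^{j-1}\leq\kappa^{iM+j-1}$), and finish with $\sqrt{\sum a_i}\leq\sum\sqrt{a_i}$. One shared caveat: in the residual block of length $r<M$ the prefactor on $\|\boldsymbol{x}(0)-\bar{\boldsymbol{x}}(0)\|^2$ is $4\rho^{r}\kappa^{kM}=4^{\,1-r/M}\kappa^{t}$, which exceeds $\kappa^{t}$ by a factor at most $4$ (so the bound really carries a constant $c_x\le 2$ after the square root); the paper's own step \eqref{RGES:eq1} elides this in exactly the same way, and it is immaterial for the RGES conclusion.
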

\begin{proof}
	This proof has similar steps to the proof of Corollary 1 in \cite{schiller2023lyapunov}. We only list the key steps here.
	
	Let $t = kM+l$, where $l \in \mathbb{I}_{[1,M-1]}$ and $k \in \mathbb{I}_{\geq 0}$. Thus, when $k=0$, we have $ l = M_t =t$. Then, according to \eqref{proof:M-step} and $0 < \rho <\kappa $, there is
	\begin{align}\label{RGES:eq1}
			& \|{\boldsymbol{e}}(l)\|^2 = \| {\boldsymbol{e}}^*(l|l) \|^2 \leq \kappa^{l}\| \boldsymbol{x}(0) - \bar{\boldsymbol{x}}(0) \|^2 \notag \\
			&\qquad \qquad + \sum_{j=1}^{l}\kappa^{j-1}\mu( \| \boldsymbol{w}(l-j) \|^2 + 2\| \boldsymbol{v}(l-j)\|^2).
	\end{align}

	For $k \geq 1$, we have $M_t =M$. Through \eqref{theorem:eq 1}, we can get
	\begin{align}\label{RGES:eq2}
			\| {\boldsymbol{e}}(t) \|^2 \leq&  \kappa^{kM} \| \boldsymbol{x}(l) - \bar{\boldsymbol{x}}(l) \|^2 \notag \\
			&+ \sum_{i=0}^{k-1} \kappa^{iM}\sum_{j=1}^{M}\kappa^{j-1}\mu \left( \| \boldsymbol{w}(t-iM-j) \|^2 \right.  \notag \\
			&\qquad \qquad \qquad \qquad \left. + 2\| \boldsymbol{v}(t-iM-j)\|^2 \right).
	\end{align}
	
	Substituting \eqref{def:kappa} and \eqref{RGES:eq1} into \eqref{RGES:eq2}, the following equation can be given
	\begin{align}\label{RGES:eq3}
			&\| {\boldsymbol{e}}(t)\|^2 \leq  \kappa^{t}\| x(0) - \bar{\boldsymbol{x}}(0) \|^2 \notag \\
			&\! +\!  \sum_{j=1}^{l}\kappa^{kM+j-1}\mu( \| \boldsymbol{w}(t \! - \! kM \! - \! j) \|^2 + 2\| \boldsymbol{v}(t \! - \! kM \! - \! j)\|^2) \notag \\
			&\! +\! \sum_{i=0}^{k-1}\!{ \sum_{j=1}^{M} \kappa^{iM+j-1}\mu( \| \boldsymbol{w}(t \! - \! iM \! - \! j) \|^2 \! + \! 2\| \boldsymbol{v}(t \! - \! iM \! - \!j)\|^2) } \notag \\
			&\!=\! \kappa^{t}\| \boldsymbol{x}(0) \! - \! \bar{\boldsymbol{x}}(0) \|^2 \! + \!  \sum_{q=0}^{t-1}\kappa^{t-1-q}\mu( \| \boldsymbol{w}(q) \|^2 \! + \! 2\| \boldsymbol{v}(q)\|^2).
	\end{align}
	
	Combining \eqref{RGES:eq1} and \eqref{RGES:eq3}, and using the fact that $\sqrt{\sum_{i=1}^{n} a_i } \leq \sum_{i=1}^{n} \sqrt{a_i}$ for all $a_i \geq 0$, we obtain
	\begin{align}\label{RGES:eq4}
			\| {\boldsymbol{e}}(t)\| \leq& (\sqrt{\kappa})^{t}\| \boldsymbol{x}(0) - \bar{\boldsymbol{x}}(0) \| \notag \\
			&+ \sum_{q=0}^{t-1}(\sqrt{\kappa})^{t-1-q}\sqrt{\mu}( \| \boldsymbol{w}(q) \| + \sqrt{2}\| \boldsymbol{v}(q)\|),
	\end{align}
	where $t \in \mathbb{I}_{\geq 0}$.  $\Box$
\end{proof}

\section{Experiment and simulation}
\label{sec:4}
In this section, we design an experimental system and a full 3 Degrees Of Freedom (3-DOF) rotational spacecraft dynamics simulation to verify \textbf{Algorithm \ref{algorithm1}} on the eddy current de-tumbling mission. The Data-driven Koopman MHE (Dd-KMHE) and Data-driven Error State Kalman Filter (Dd-ESKF) are used as a comparison. The basic structure of the Dd-KMHE is shown in \cite{surana2020koopman}, where the Extended Dynamic Mode Decomposition (EDMD) algorithm based on thin plate spline radial basis functions is used to generate the lifted linear system \cite{korda2018linear}. {The basic framework of the Dd-ESKF is consistent with Section II-B of \cite{madyastha2011extended}, where the system model is given by \textit{Theorem 1} of \cite{de2019formulas}.}
The whole experiment includes nominal experiments, disturbance and noise experiments. The physical parameters in the 3-DOF rotational dynamics simulation are consistent with \cite{liu2022robust}.

\subsection{Eddy Current De-tumbling Experiment}
\subsubsection{Experiment description}
\label{subsec:4.1}
As shown in Fig.\ref{fig:1}, the experimental system includes a 2-axis linear module, a NdFeB permanent magnet, a target rotation system, and a computer. The 2-axis linear module adjusts the relative distance between the NdFeB permanent magnet and the target. The NdFeB permanent magnet generates a magnetic field, and a Net F/T Transducer is mounted underneath it to measure the de-tumbling torque.

\begin{figure}[htbp]\centering
	\includegraphics[width = 8.5cm]{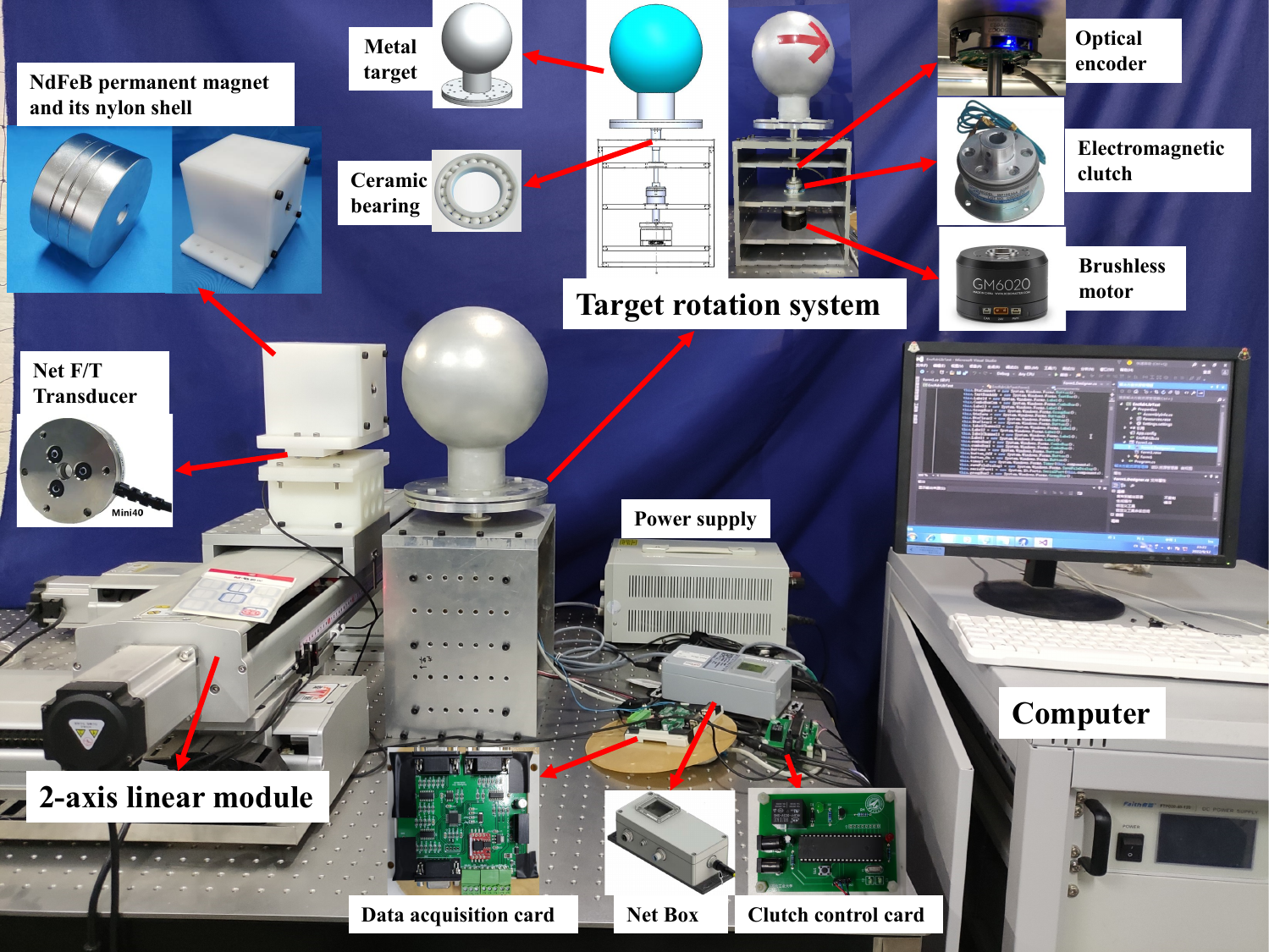}
	\caption{Eddy current de-tumbling experimental system}\label{fig:1}
\end{figure}

The target rotation system is also shown in Fig.\ref{fig:1} where a metal target starts to rotate under the drive of the brushless motor. After reaching a certain rotation speed, the computer sends a command to disengage the electromagnetic clutch, and the target starts to rotate freely. During this period, an optical encoder measures the target's angular velocity and sends the target's angular velocity information to the computer through the data acquisition card. Moreover, a ceramic bearing is used to reduce the frictional force during the target rotation.

The parameters of Dd-MHE in the experiment are $\rho = 0.9$, $M = 20$, $\mu = 10^5$. The constraint sets $\mathbb{V} = \mathbb{R}$ and $\mathbb{X} = \{\hat{\boldsymbol{x}}(t) | \underline{\boldsymbol{x}} \leq \hat{\boldsymbol{x}}(t) - \boldsymbol{x}_c(t) \leq \bar{\boldsymbol{x}} \}$ where $\boldsymbol{x}_c(t) = \boldsymbol{H}_{M_t+1}({\boldsymbol{x}}_{[-T, 0]}) (\boldsymbol{H}_{M_t}(\boldsymbol{y}_{[-T, -1]}))^{\dagger}\boldsymbol{y}(t)$, $\bar{\boldsymbol{x}} = - \underline{\boldsymbol{x}} = \frac{10\pi}{60} \boldsymbol{1}_{3}$. For comparison, Dd-KMHE adopts the same $\rho$, $M$, and $\mu$, and the lifting dimension is set to 20. The center of the thin plate spline radial basis function is generated by random sequence. The sampling interval of data during the experiment is $T_s = 0.01 \ \rm{(s)}$. The data in the first $0.5 \ \rm{(s)}$, a total of 50 data, are used as historical data, so the estimated value in the first $0.5 \ \rm{(s)}$ will be equal to the true value.

\subsubsection{Experimental results and analysis}
In this section, we show and analyze the results of the nominal experiment, the large sensor noise experiment, the large process disturbance experiment, and the experiment that contains both large noise and large disturbance. 

Fig.\ref{fig:2} shows that Dd-MHE, Dd-KMHE and Dd-ESKF converge for the nominal experiment. The parameters of Dd-ESKF are selected as $Q=10^5$ and $R=10^3$. Such a choice of parameters means that Dd-ESKF will rely more on the system model than on the measured output. Therefore, Dd-ESKF exhibits strong noise robustness, but cannot track the state changes caused by disturbances, as shown in Fig.\ref{fig:3} and Fig.\ref{fig:4}.

\begin{figure}[htbp]\centering
	\includegraphics[scale=0.5]{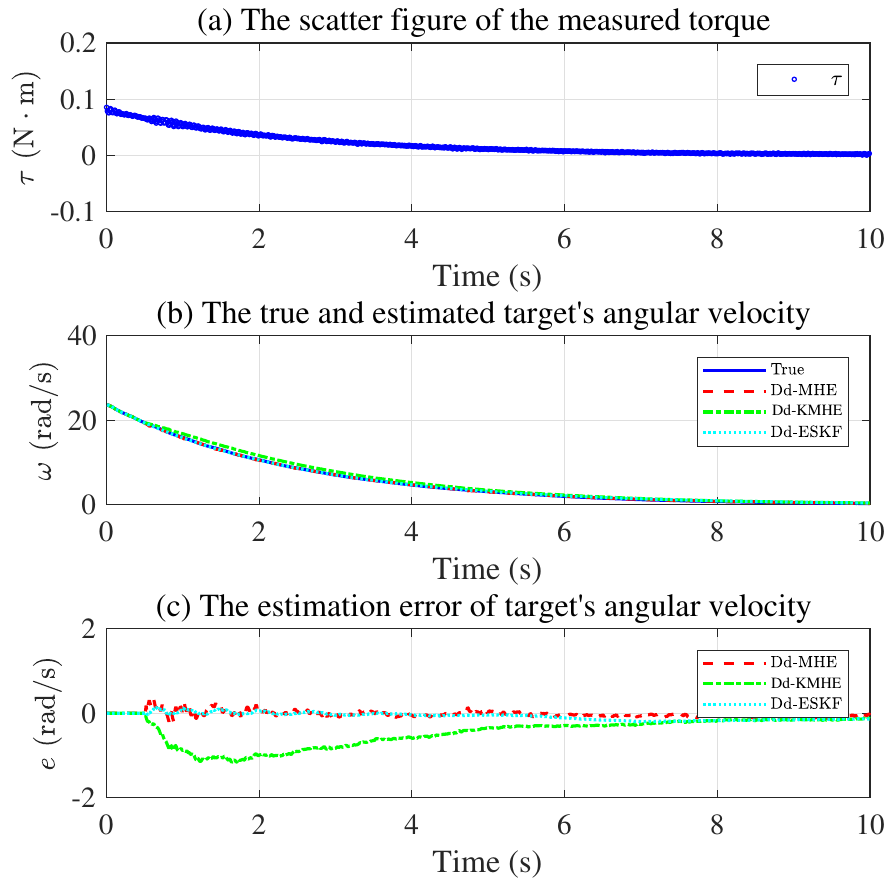}
	\caption{Results of nominal estimation experiment.}\label{fig:2}
\end{figure}

In the noise experiment (Fig.\ref{fig:3}), the estimation results of both Dd-MHE and Dd-KMHE show a bias, but both converge to the true value quickly after the noise ends. This bias is acceptable based on the fact that the estimators proposed in this paper are both data-driven. {In addition, from Fig.\ref{fig:3} to Fig.\ref{fig:5}, it can be found that Dd-MHE is more noise contaminated than that of the Dd-KMHE. It may be related to the fact that we only consider noise in the MHE optimization problem. As mentioned in \textit{Remark \ref{remark 1}}, the future focus will be on this issue.}

\begin{figure}[htbp]\centering
	\includegraphics[scale=0.5]{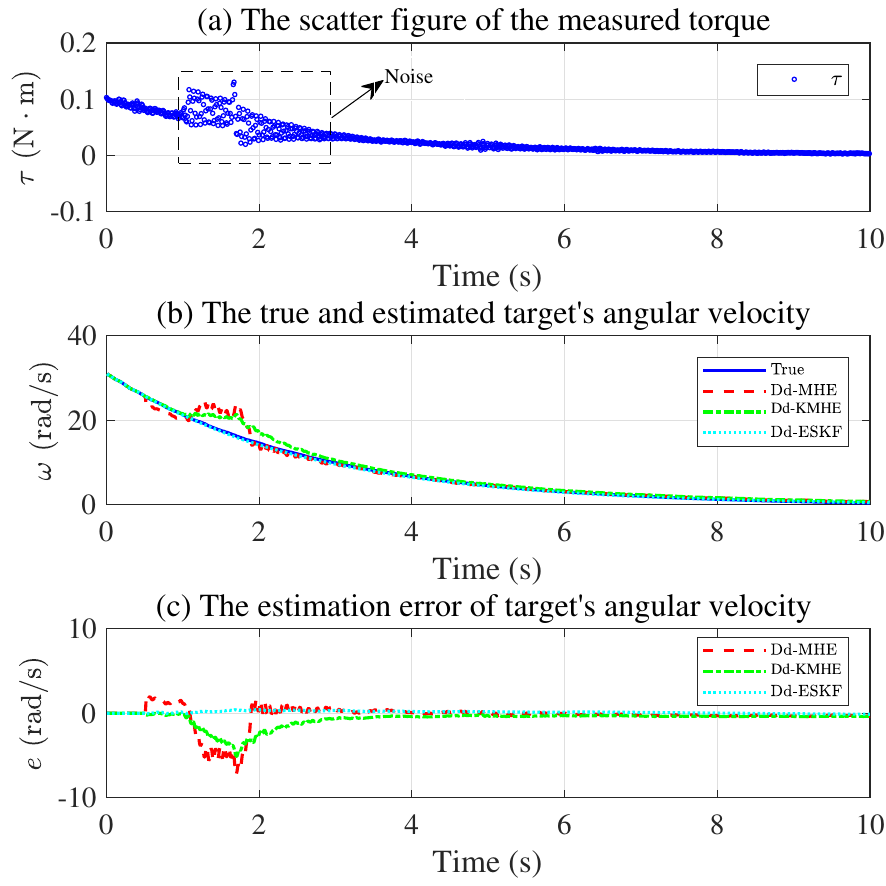}
	\caption{Results of noise estimation experiment.}\label{fig:3}
\end{figure}

\begin{figure}[htbp]\centering
	\includegraphics[scale=0.5]{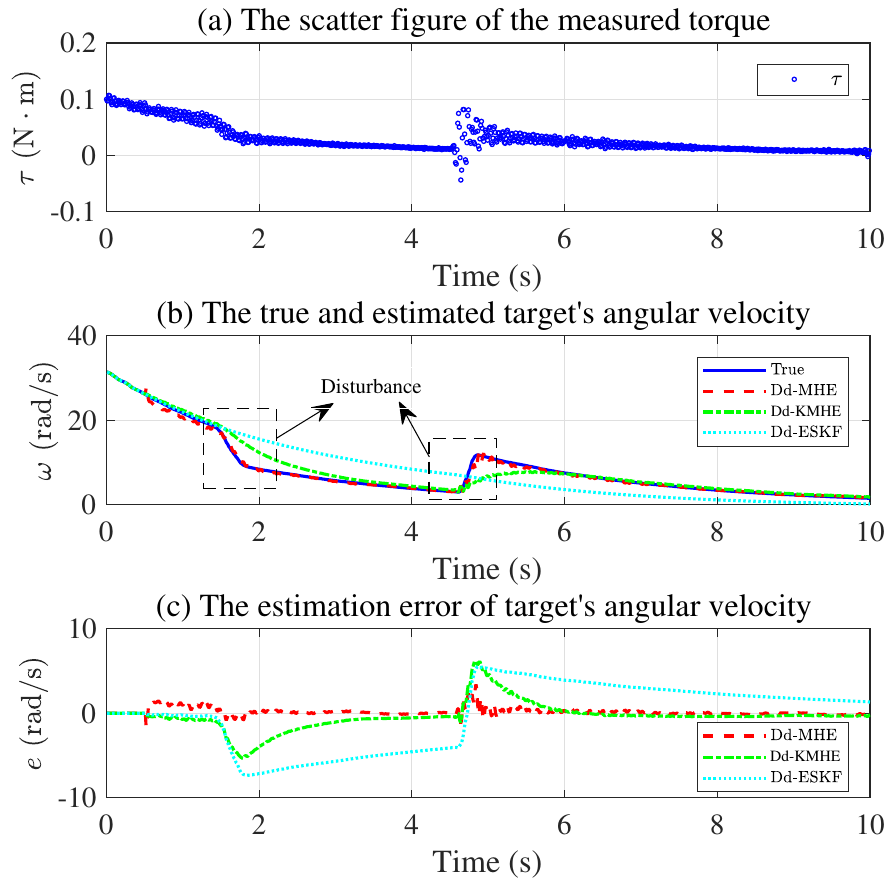}
	\caption{Results of disturbance estimation experiment.}\label{fig:4}
\end{figure}

The disturbance experiment (Fig.\ref{fig:4}) focuses on exploring the adaptability of the estimator to large process disturbances. Clearly, Dd-MHE shows significantly superior disturbance adaptation. It quickly tracks the true value after the disturbance. Dd-KMHE is less adaptive to the disturbance compared to Dd-MHE. In addition, we explored the estimator's performance for the case of containing both noise and disturbances. As shown in Fig.\ref{fig:5}, Dd-KMHE exhibits unacceptably large deviations, while Dd-MHE exhibits good performance under complex perturbations and noise. {The reason for the large performance difference is the lack of accuracy of the prediction model in Dd-KMHE due to the small amount of data.}

\begin{figure}[htbp]\centering
	\includegraphics[scale=0.5]{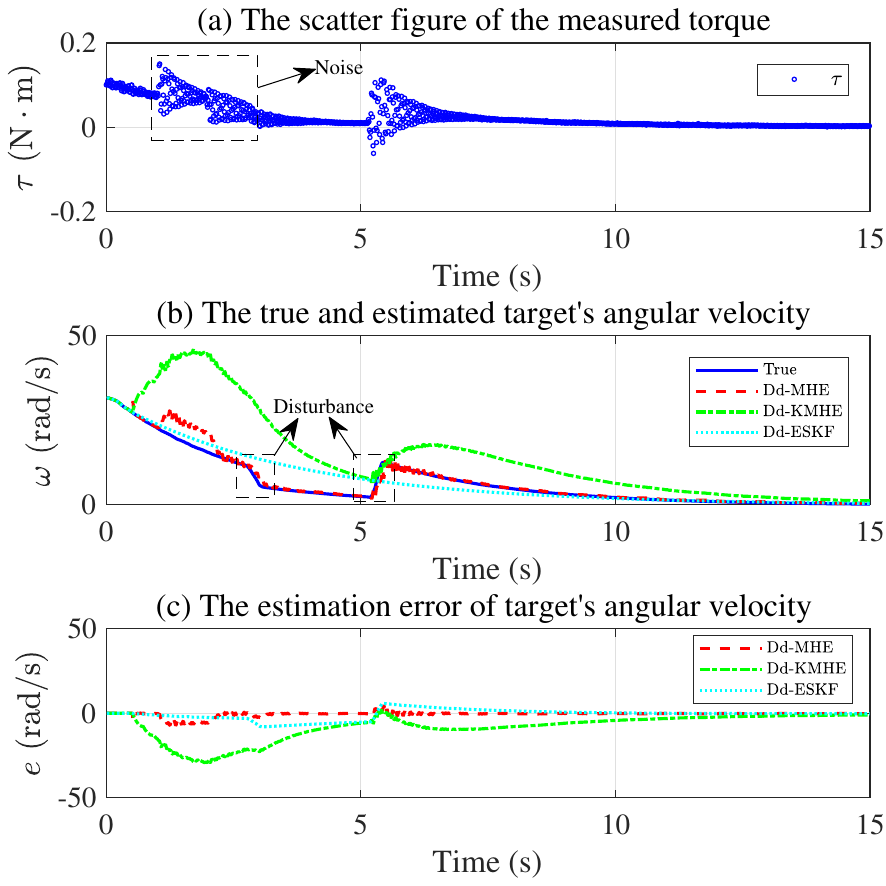}
	\caption{Results of noise and disturbance estimation experiment.}\label{fig:5}
\end{figure}

Finally, we performed the sensitivity analysis for the parameter $\mu$. This parameter is essential for the performance and stability of the estimator. According to \textbf{Algorithm \ref{algorithm2}}, $\rho_0 = 1.2705 \times 10^{-19}$, $\mu_0 = 3.3307 \times 10^{-14}$. 
Fig.\ref{fig:6} shows the estimation results of Dd-MHE and Dd-KMHE for different values of $\mu$. Although the estimation results are stable for all $\mu$, Dd-KMHE is much more sensitive to $\mu$, while Dd-MHE is less affected by $\mu$. It means that when we do not know much about the system to be estimated, an unreasonable choice for $\mu$ can lead to unacceptable bias in Dd-KMHE, while Dd-MHE tends to have a better estimation performance.
\begin{figure}[htbp]\centering
	\includegraphics[scale=0.5]{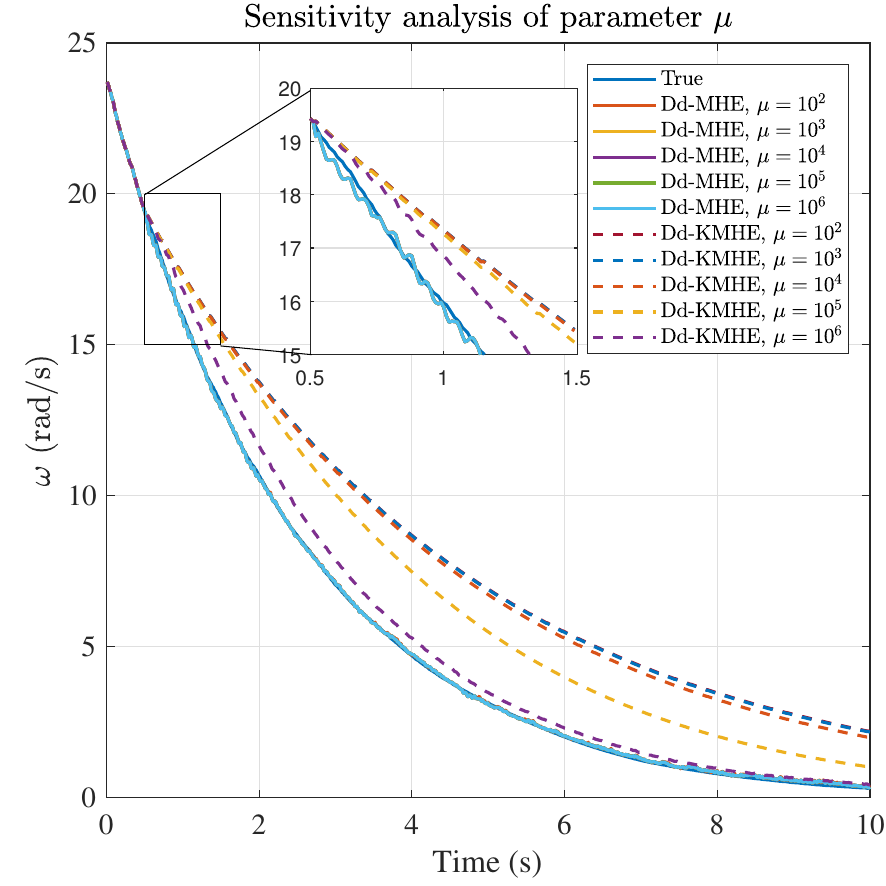}
	\caption{Sensitivity analysis of parameter $\mu$.}\label{fig:6}
\end{figure}

\subsection{{3-DOF rotational dynamics simulation}}
In this section, 3-DOF rotational dynamics simulation is adopted to demonstrate the performance of Dd-MHE for nonlinear systems.
In this simulation, we employ the magnetic dipole approximation model for the de-tumbling torque, which is given as \cite{gomez2017guidance}:
\begin{align}
	\boldsymbol{\tau}_{t}(\boldsymbol{\omega}_t) = \left(\boldsymbol{M}_{\rm{eff}} \left( \left( \boldsymbol{\omega}_t - \boldsymbol{\omega}_c \right)  \times \boldsymbol{B}_{Gt}\right)   \right) \times \boldsymbol{B}_{Gt},
\end{align} 
where $\boldsymbol{M}_{\rm{eff}}$ denotes the effective magnetic tensor, which is a constant matrix for a space target. $\boldsymbol{\omega}_c$ denotes the angular velocity of the chaser, assumed to be constant. $\boldsymbol{B}_{Gt}$ is the magnetic field at the center of gravity (COG) of the target. And the measurement output is
\begin{align}
	\boldsymbol{\tau}_{c}(\boldsymbol{\omega}_t) = -\boldsymbol{\tau}_{t}(\boldsymbol{\omega}_t) - \boldsymbol{r} \times \boldsymbol{F}_{ct}(\boldsymbol{\omega}_t),
\end{align}
where $\boldsymbol{r}$ represents the relative position from the chaser to the target, $\boldsymbol{F}_{ct}(\boldsymbol{\omega}_t)$ is the induced  force on the target by the chaser given by
\begin{align}
	\boldsymbol{F}_{ct}(\boldsymbol{\omega}_t) = \boldsymbol{\Lambda}_{Gt} \boldsymbol{M}_{\rm{eff}}( \left( \boldsymbol{\omega}_t - \boldsymbol{\omega}_c \right)  \times \boldsymbol{B}_{Gt} ),
\end{align}
where $\boldsymbol{\Lambda}_{Gt}$ is the Jacobian tensor of the magnetic field at the COG of the target.
The above parameters in this simulation are taken as: $\boldsymbol{J}_t = \rm{diag}([4513.2,4138.1,3282.5]) \ \rm{(kg \cdot m^2)}$, $\boldsymbol{M}_{\rm{eff}} = 0.89 \cdot 10^5 \cdot \rm{diag}([5.908,5.908,1.951]) \ \rm{(S \cdot m^4)}$, $\boldsymbol{\omega}_c = \boldsymbol{0} \ \rm{(deg/s)}$, $\boldsymbol{\omega}_t = [14.364,1.224,3.4195]^\top \ \rm{(deg/s)}$, $\boldsymbol{B}_{Gt} = [41,51,-41]^\top \cdot 10^{-4} \ \rm{(T)}$, $\boldsymbol{r} = [0.5,1,0.5]^\top \rm{(m)}$, and
$$\boldsymbol{\Lambda}_{Gt} = \begin{bmatrix}
	-23 & -116  & 8 \\
	-116 & -119 & 49 \\
	8  & 49  & 142
\end{bmatrix}\cdot 10^{-4}.$$

\begin{figure}[htbp]\centering
	\includegraphics[width = 8.5 cm]{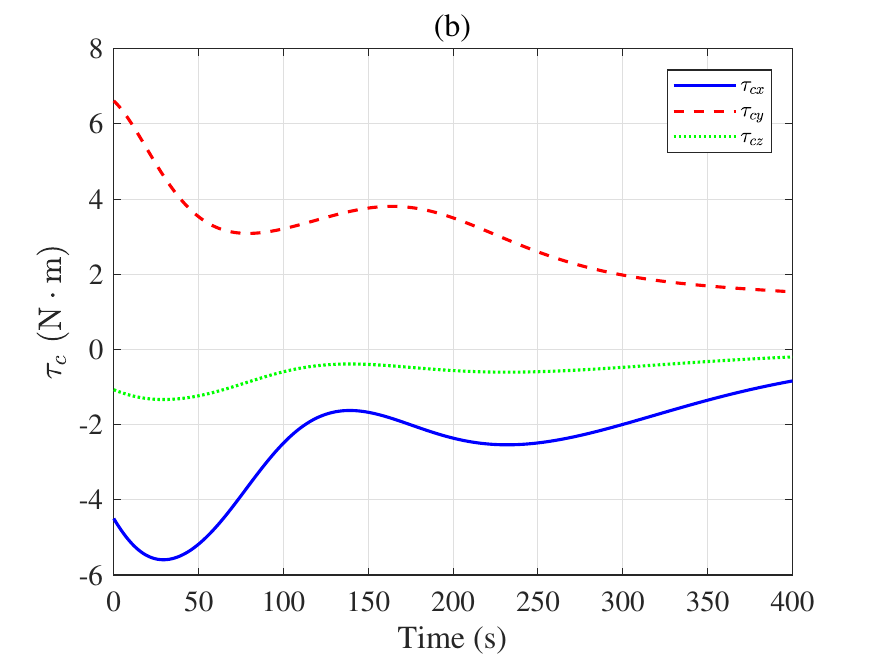}
	\caption{Output of the simulation case}\label{fig:7}
\end{figure}

The parameters of Dd-MHE are: $\rho = 0.8$, $M = 10$, $\mu = 10^{5}$ with $\rho_0 = 0.4614$ and $\mu_0 = 21.1561$. The output of the simulation case is shown in Fig.\ref{fig:7}. In this simulation, we assume that the measured values in the first $20 \  \rm{(s)}$ are known and use them as historical data. 
After the $20 \  \rm{(s)}$, the target's angular velocity can no longer be obtained directly, and the estimation of the target's angular velocity begins. 
\begin{figure}[htbp]\centering
	\includegraphics[width = 8.5cm]{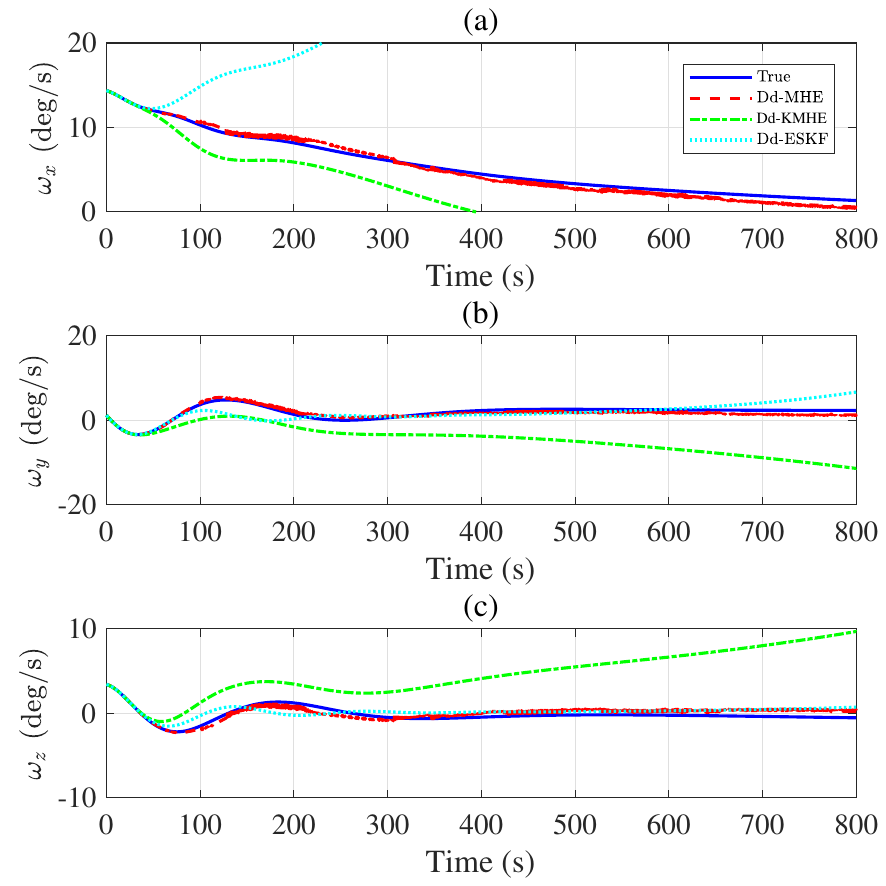}
	\caption{The target's angular velocity and its estimation of 3-DOF rotational dynamics simulation}\label{fig:8}
\end{figure}

It can be seen from Fig.\ref{fig:8} that Dd-MHE can well estimate the angular velocity of the target, but the performance of the Dd-KMHE and Dd-ESKF declines very fast. In the whole simulation, the estimated value by Dd-MHE is stable, while the estimated value by Dd-KMHE and Dd-ESKF is divergent. The reason for the estimated value by Dd-KMHE and Dd-ESKF diverging is that they both need a lot of data to obtain an accurate approximate model or appropriate parameters to reduce the dependence on the prediction model. In contrast, the Willems' fundamental lemma only needs to satisfy the rank condition \eqref{rank:H1}. Therefore, Dd-MHE can estimate the angular velocity of the target well of a small data set. 

\section{Conclusion}
This paper proposes a data-driven moving horizon estimation method for small data sets and applies it to the angular velocity estimation of noncooperative targets in the eddy current de-tumbling mission. Compared with data-driven methods for large data sets or appropriate parameters, such as Dd-KMHE and Dd-ESKF, the proposed method in this paper has better estimation performance and is less sensitive to the parameter.
However, since the method proposed in this paper is based on a local linear approximation, its effectiveness in strongly nonlinear systems is not yet known. It will be an issue that needs further study in the future.

\appendices

\section{Pole Placement for Unknown Systems}
\label{sec:Appendix A}
The appendix is intended to provide a pole placement method for unknown systems. The method can give a state observer gain $\boldsymbol{L}$ of unknown systems and make \eqref{constraint rho mu} hold.

\subsection{Duel system for Data-Driven Estimation}\label{sec:Appendix A1}
Consider the following unknown linear observable system
\begin{align}
	\label{Eq:ULAS}
	\begin{aligned}
		\boldsymbol{x}(t+1) = \boldsymbol{A} \boldsymbol{x}(t), \quad \boldsymbol{y}(t) = \boldsymbol{C} \boldsymbol{x}(t),
	\end{aligned}
\end{align}
where $\boldsymbol{x}(t) \in \mathbb{R}^{n}$ and $\boldsymbol{y}(t) \in \mathbb{R}^{p}$ denote the system state and output at time $t$, respectively. $\boldsymbol{A} \in \mathbb{R}^{n \times n}$ is the system matrix, $\boldsymbol{C} \in \mathbb{R}^{p \times n}$ is the output matrix, and they are both unknown. $n$ and $p$ are the dimensions of the system state and output, respectively.

The dual system \cite{callier1991linear} for \eqref{Eq:ULAS} is defined as
\begin{align}\label{Eq:dual system}
	\boldsymbol{x}_{dl}(t+1) = \boldsymbol{A}^\top \boldsymbol{x}_{dl}(t) + \boldsymbol{C}^\top \boldsymbol{u}_{dl}(t),
\end{align}
where $\boldsymbol{x}_{dl}(t) \in \mathbb{R}^{n}$ and $\boldsymbol{u}_{dl}(t) \in \mathbb{R}^{p}$ denote the state and input of dual system at time $t$, respectively.
Obviously, the observability of the original system \eqref{Eq:ULAS} is equivalent to the controllability of the dual system \eqref{Eq:dual system}.  Then, our purpose is transformed into designing a controller gain matrix $\boldsymbol{L}^\top$ of the dual system to satisfy \eqref{constraint rho mu}.
A detailed discussion of this idea can be found in \cite{adachi2021dual}.


Considering that the trajectory of the dual system will be used later, 
we give the following assumption and lemma.
\begin{assumption}
	\label{assum:3}
	For a trajectory $\{ \boldsymbol{x}_{[0, T]}, \boldsymbol{y}_{[0, T-1]} \}$ of \eqref{Eq:ULAS} with length $T$, and let $\boldsymbol{X}_{0,T-1} =\boldsymbol{H}_1(\boldsymbol{x}_{[0,T-1]})$, $\boldsymbol{X}_{1,T} = \boldsymbol{H}_1(\boldsymbol{x}_{[1,T]})$, $\boldsymbol{Y}_{0,T-1} = \boldsymbol{H}_1(\boldsymbol{y}_{[0,T-1]})$, it is assumed that $\boldsymbol{X}_{0,T-1}$ and $\left[\boldsymbol{X}_{1,T}^\top \quad \boldsymbol{Y}_{0,T-1}^\top \right]$ have full rank, that is
	\begin{align}\label{output rank condition}
		\mathrm{rank} \! \left( \boldsymbol{X}_{0,T-1} \right) \! = \! n, \  \mathrm{rank} \! \left( \left[\boldsymbol{X}_{1,T}^\top \quad \boldsymbol{Y}_{0,T-1}^\top \right] \right) \! = \!  n \! + \! p.
	\end{align}
\end{assumption}

\begin{lemma}\cite{adachi2021dual}\label{lemma2}
	The data sequence
	\begin{align}\label{trajectory of the dual system}
		\boldsymbol{X}^{dl}_{1,T} \! = \! (\boldsymbol{X}^\top_{0,T-1})^\dagger, 
		\begin{bmatrix}
			\boldsymbol{X}^{dl}_{0,T-1} \\ \boldsymbol{U}^{dl}_{0,T-1} 
		\end{bmatrix} \! = \!  \left[\boldsymbol{X}_{1,T}^\top \quad \boldsymbol{Y}_{0,T-1}^\top \right]^\dagger,
	\end{align}
	is a trajectory of the dual system \eqref{Eq:dual system} in the meaning of least squares.
\end{lemma}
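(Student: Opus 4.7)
The plan is to verify directly, through algebraic manipulation of Moore--Penrose pseudoinverses, that the candidate sequence in \eqref{trajectory of the dual system} satisfies the dual dynamics \eqref{Eq:dual system}. First, since $\{\boldsymbol{x}_{[0,T]},\boldsymbol{y}_{[0,T-1]}\}$ is generated by \eqref{Eq:ULAS}, I would express the system equations column-wise as $\boldsymbol{X}_{1,T}=\boldsymbol{A}\boldsymbol{X}_{0,T-1}$ and $\boldsymbol{Y}_{0,T-1}=\boldsymbol{C}\boldsymbol{X}_{0,T-1}$; stacking and transposing yields the key factorization $[\boldsymbol{X}_{1,T}^\top\ \boldsymbol{Y}_{0,T-1}^\top]=\boldsymbol{X}_{0,T-1}^\top [\boldsymbol{A}^\top\ \boldsymbol{C}^\top]$. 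This identity is the workhorse of the proof, because it rewrites the data matrix that is pseudoinverted in \eqref{trajectory of the dual system} as a product whose pseudoinverse can be split through a reversal rule.

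Substituting the definitions from \eqref{trajectory of the dual system} into the dual equation $\boldsymbol{X}^{dl}_{1,T}=\boldsymbol{A}^\top\boldsymbol{X}^{dl}_{0,T-1}+\boldsymbol{C}^\top\boldsymbol{U}^{dl}_{0,T-1}$ reduces the lemma to the single algebraic identity $(\boldsymbol{X}_{0,T-1}^\top)^\dagger = [\boldsymbol{A}^\top\ \boldsymbol{C}^\top]\bigl(\boldsymbol{X}_{0,T-1}^\top[\boldsymbol{A}^\top\ \boldsymbol{C}^\top]\bigr)^\dagger$. Writing $P=\boldsymbol{X}_{0,T-1}^\top$ and $Q=[\boldsymbol{A}^\top\ \boldsymbol{C}^\top]$, Assumption~\ref{assum:3} supplies that $P$ has full column rank $n$, and the PBH test applied to the observable pair $(\boldsymbol{A},\boldsymbol{C})$ at $\lambda=0$ gives that $Q$ has full row rank $n$. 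Under these rank hypotheses, the reverse-order pseudoinverse rule $(PQ)^\dagger=Q^\dagger P^\dagger$ applies, and combined with $QQ^\dagger=\boldsymbol{I}_n$ it collapses the right-hand side to $P^\dagger=(\boldsymbol{X}_{0,T-1}^\top)^\dagger$, closing the identity.

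The main obstacle will be this pseudoinverse reversal step, because $(PQ)^\dagger=Q^\dagger P^\dagger$ only holds under precise rank conditions on both factors; care must therefore be taken to confirm that the full column rank of $\boldsymbol{X}_{0,T-1}^\top$ is really delivered by Assumption~\ref{assum:3} and that the full row rank of $[\boldsymbol{A}^\top\ \boldsymbol{C}^\top]$ is genuinely a consequence of observability rather than an additional hidden hypothesis. A secondary point to pin down is the phrase ``in the meaning of least squares'' in the statement: because the dual state and input are constructed through pseudoinverses, which deliver minimum-norm least-squares solutions to the underlying rank-deficient systems, the dual dynamics should be read as satisfied in that minimum-norm least-squares sense rather than as the exact trajectory evolving from a deterministic initial condition, and making this interpretation explicit is what completes the argument and ensures the downstream pole-placement construction in the appendix is unambiguous.
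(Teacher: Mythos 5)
The paper never proves this lemma---it is imported verbatim from \cite{adachi2021dual}---so your argument can only be judged on its own terms, and on those terms the core of it is sound. Writing the exact data relations as $\boldsymbol{X}_{1,T}=\boldsymbol{A}\boldsymbol{X}_{0,T-1}$ and $\boldsymbol{Y}_{0,T-1}=\boldsymbol{C}\boldsymbol{X}_{0,T-1}$, transposing to obtain $[\boldsymbol{X}_{1,T}^\top\ \boldsymbol{Y}_{0,T-1}^\top]=PQ$ with $P=\boldsymbol{X}_{0,T-1}^\top$ of full column rank (first half of Assumption~\ref{assum:3}) and $Q=[\boldsymbol{A}^\top\ \boldsymbol{C}^\top]$ of full row rank (PBH at $\lambda=0$, which observability does deliver), and then applying the full-rank-factorization form of the reverse-order law $(PQ)^\dagger=Q^\dagger P^\dagger$ together with $QQ^\dagger=\boldsymbol{I}_n$, correctly yields $\boldsymbol{A}^\top\boldsymbol{X}^{dl}_{0,T-1}+\boldsymbol{C}^\top\boldsymbol{U}^{dl}_{0,T-1}=(\boldsymbol{X}^\top_{0,T-1})^\dagger=\boldsymbol{X}^{dl}_{1,T}$, which is the one-step matrix relation that the downstream LQR and pole-placement LMIs actually consume.

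There is, however, one concrete tension you should resolve rather than leave implicit. Your key factorization forces $\mathrm{rank}\left([\boldsymbol{X}_{1,T}^\top\ \boldsymbol{Y}_{0,T-1}^\top]\right)\le n$, which contradicts the second half of Assumption~\ref{assum:3}, namely the requirement $\mathrm{rank}=n+p$ in \eqref{output rank condition}. So either the data are exactly generated by \eqref{Eq:ULAS}, in which case your proof goes through but that rank hypothesis can never be satisfied and is not available to you; or the data are perturbed so that the $n+p$ condition can hold, in which case the factorization is only approximate, the reverse-order law no longer applies exactly, and the ``least squares'' qualifier must be carried through the entire derivation (e.g., by replacing $\boldsymbol{A}$ and $\boldsymbol{C}$ with the least-squares estimate $[\boldsymbol{X}_{1,T}^\top\ \boldsymbol{Y}_{0,T-1}^\top]^\top\boldsymbol{X}_{0,T-1}^\dagger$ and quantifying the residual) rather than appended as an interpretive remark at the end. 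A second, smaller point: your identity shows the columns of the three matrices satisfy the dual one-step map, but not that $\boldsymbol{X}^{dl}_{1,T}$ is the time-shift of $\boldsymbol{X}^{dl}_{0,T-1}$, i.e., not that the columns concatenate into a single trajectory $\boldsymbol{x}_{dl}(0),\dots,\boldsymbol{x}_{dl}(T)$; this is harmless for Algorithm~\ref{algorithm2}, which only needs the matrix relation, but it should be stated explicitly if you claim to have produced a ``trajectory.''
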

\begin{remark}
	For the linear autonomous system with offsets such as \eqref{Eq:LAS}, let 
	\begin{subequations}\label{data for offsets system}
		\begin{align}
			&\boldsymbol{X}_{0,T-1} = \boldsymbol{H}_1(\boldsymbol{x}_{[1,T-1]}) - \boldsymbol{H}_1(\boldsymbol{x}_{[0,T-2]}), \\
			&\boldsymbol{X}_{1,T} =\boldsymbol{H}_1(\boldsymbol{x}_{[2,T]}) - \boldsymbol{H}_1(\boldsymbol{x}_{[1,T-1]}),\\
			&\boldsymbol{Y}_{0,T-1} = \boldsymbol{H}_1(\boldsymbol{y}_{[1,T-1]}) - \boldsymbol{H}_1(\boldsymbol{y}_{[0,T-2]}),
		\end{align}
	\end{subequations}
	then \textit{Lemma \ref{lemma2}} still holds under \textit{Assumption \ref{assum:3}}.  Note that in this case, the number of columns in $\boldsymbol{X}_{0,T-1}$ is $T-1$ (no longer $T$), so as $\boldsymbol{X}_{1,T}$ and $\boldsymbol{Y}_{0,T-1}$.
\end{remark}

\subsection{Pole Placement Method}
In subsection Appendix \ref{sec:Appendix A1}, we have transformed the pole placement for the state observer of the unknown system \eqref{Eq:ULAS} into that for the controller of the dual system \eqref{Eq:dual system}. The traditional pole placement method \cite{callier1991linear} includes a complex process, which is not conducive to unknown systems. In robust control, Linear Matrix Inequality (LMI) provides a convenient regional pole placement method \cite{chilali1996h}. In addition, the Linear Quadratic Regulation (LQR) of unknown systems are combined with LMI to ensure the controller's performance \cite{de2019formulas}. All these make LMI more suitable for dealing with the pole placement problem in this paper. 
Next, we briefly introduce the regional pole placement and data-driven LQR and give the pole placement method for unknown systems with guaranteed performance.
\subsubsection{regional pole placement}
\begin{definition}\cite{chilali1996h}\label{Def:3}
	A subset $\mathcal{D}$ of the complex plane is called an LMI region if there exists a symmetric matrix $\boldsymbol{L}_{\mathcal{D}}$ and a matrix $\boldsymbol{M}_{\mathcal{D}}$ such that
	\begin{align}
		\mathcal{D} = \{ z \in \mathbb{C}: \boldsymbol{L}_{\mathcal{D}} + z \boldsymbol{M}_{\mathcal{D}} + z^* \boldsymbol{M}^\top_{\mathcal{D}} \prec 0 \},
	\end{align}
	where $\mathbb{C}$ stands for the sets of complex numbers, $z^*$ is the complex conjugate of $z$, ``$\prec 0$'' denotes negative definite. Correspondingly, ``$\succ 0$'' denotes positive definite. 
\end{definition}

Then, the following lemma gives the necessary and sufficient condition for the pole location in the LMI region $\mathcal{D}$.
\begin{lemma}\cite{chilali1996h}\label{lemma3}
	The pole of matrix $\boldsymbol{A}^\top_{\boldsymbol{L}}$ is located in the LMI region $\mathcal{D}$ if and only if there exists a symmetric positive definite matrix $\boldsymbol{X}_{\mathcal{D}}$ such that
	\begin{align}\label{Original pole placement}
		\boldsymbol{L}_{\mathcal{D}} \! \otimes \! \boldsymbol{X}_{\mathcal{D}} \! + \! \boldsymbol{M}_{\mathcal{D}} \! \otimes \! (\boldsymbol{A}^\top_{\boldsymbol{L}} \boldsymbol{X}_{\mathcal{D}}) \! +  \! \boldsymbol{M}_{\mathcal{D}}^\top \! \otimes \! (\boldsymbol{A}^\top_{\boldsymbol{L}} \boldsymbol{X}_{\mathcal{D}})^\top
		\! \succ \! 0, 
	\end{align}
	where $\boldsymbol{A}^\top_{\boldsymbol{L}} = \boldsymbol{A}^\top + \boldsymbol{C}^\top \boldsymbol{L}^\top$ is the system matrix of \eqref{Eq:dual system} with $\boldsymbol{u}_{dl}(t) = \boldsymbol{L}^\top \boldsymbol{x}_{dl}(t)$.
\end{lemma}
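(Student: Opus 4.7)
The plan is to establish the equivalence between scalar membership of each eigenvalue of $\boldsymbol{A}^\top_{\boldsymbol{L}}$ in $\mathcal{D}$ and the matrix LMI, following the classical Chilali--Gahinet argument. The proof splits into two directions with quite different flavors, and the bridge between them is the extraction of the scalar ``characteristic'' form $\boldsymbol{L}_{\mathcal{D}} + \lambda \boldsymbol{M}_{\mathcal{D}} + \lambda^* \boldsymbol{M}_{\mathcal{D}}^\top$ from the matrix expression in \eqref{Original pole placement}.

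For the ``if'' direction, I would first exploit $\boldsymbol{X}_{\mathcal{D}} \succ 0$ by applying the congruence $(\boldsymbol{I}_k \otimes \boldsymbol{X}_{\mathcal{D}}^{-1/2})(\cdot)(\boldsymbol{I}_k \otimes \boldsymbol{X}_{\mathcal{D}}^{-1/2})$ to the LMI. Using standard Kronecker identities, this rewrites the condition as $\boldsymbol{L}_{\mathcal{D}} \otimes \boldsymbol{I}_n + \boldsymbol{M}_{\mathcal{D}} \otimes \boldsymbol{B} + \boldsymbol{M}_{\mathcal{D}}^\top \otimes \boldsymbol{B}^H \succ 0$, where $\boldsymbol{B} := \boldsymbol{X}_{\mathcal{D}}^{-1/2} \boldsymbol{A}^\top_{\boldsymbol{L}} \boldsymbol{X}_{\mathcal{D}}^{1/2}$ is similar to $\boldsymbol{A}^\top_{\boldsymbol{L}}$ and hence has the same spectrum. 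I would then take the Schur decomposition $\boldsymbol{B} = \boldsymbol{U}\boldsymbol{T}\boldsymbol{U}^H$, with $\boldsymbol{T}$ upper triangular carrying the eigenvalues $\lambda_i$ on its diagonal, and perform a second congruence $(\boldsymbol{I}_k \otimes \boldsymbol{U}^H)(\cdot)(\boldsymbol{I}_k \otimes \boldsymbol{U})$. A commutation of the Kronecker factors (the usual ``shuffle'' permutation) then rearranges the $(nk)\times(nk)$ matrix as an $n\times n$ block matrix whose $(i,i)$ diagonal block is exactly $\boldsymbol{L}_{\mathcal{D}} + \lambda_i \boldsymbol{M}_{\mathcal{D}} + \lambda_i^* \boldsymbol{M}_{\mathcal{D}}^\top$, with Hermitian-conjugate off-diagonal blocks inherited from $\boldsymbol{T}$. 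Since any principal submatrix of a positive definite matrix is positive definite, each diagonal block inherits the sign, which yields the scalar LMI certifying $\lambda_i \in \mathcal{D}$.

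For the ``only if'' direction, I would adopt a constructive/continuity approach. For elementary LMI regions such as half-planes or disks, a candidate $\boldsymbol{X}_{\mathcal{D}}$ can be built explicitly from a Lyapunov or Stein equation whose solvability is guaranteed by the pole-location assumption. For a general LMI region, I would then exploit two structural facts: the set of LMI regions is closed under intersection, and the set of positive definite certificates satisfying the LMI is convex. Combining an elementary certificate per constituent region via a convex combination, or alternatively using a homotopy argument that deforms $\boldsymbol{A}^\top_{\boldsymbol{L}}$ to a normal matrix with the same spectrum (for which the eigenbasis makes $\boldsymbol{X}_{\mathcal{D}}$ trivially diagonal), yields a feasible $\boldsymbol{X}_{\mathcal{D}}$ in the general case.

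The main obstacle will be the ``only if'' direction. Producing a single $\boldsymbol{X}_{\mathcal{D}}$ that witnesses the matrix LMI for a non-normal $\boldsymbol{A}^\top_{\boldsymbol{L}}$ is genuinely nontrivial, because the product $\boldsymbol{A}^\top_{\boldsymbol{L}} \boldsymbol{X}_{\mathcal{D}}$ couples positivity of $\boldsymbol{X}_{\mathcal{D}}$ to the asymmetry of $\boldsymbol{A}^\top_{\boldsymbol{L}}$, and the convex-combination/homotopy step must be justified with care (for instance, by showing the feasible set varies upper semicontinuously along the homotopy). A secondary nuisance in the ``if'' direction is the careful tracking of the Kronecker-factor permutation and the sign/transpose bookkeeping so that the diagonal blocks cleanly expose the scalar characteristic function; this is routine but the most error-prone step in the concrete calculation.
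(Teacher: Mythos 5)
First, note that the paper offers no proof of this lemma: it is quoted (up to a sign, see below) from Chilali and Gahinet, so there is no in-paper argument to compare against and your proposal must stand on its own. Your ``if'' direction is essentially sound and is a legitimate variant of the classical argument: the congruence by $\boldsymbol{I}\otimes\boldsymbol{X}_{\mathcal{D}}^{-1/2}$, the Schur form, and the shuffle permutation do place $f_{\mathcal{D}}(\lambda_i)=\boldsymbol{L}_{\mathcal{D}}+\lambda_i\boldsymbol{M}_{\mathcal{D}}+\lambda_i^*\boldsymbol{M}_{\mathcal{D}}^\top$ on the block diagonal, and positivity of principal submatrices finishes it (Chilali--Gahinet reach the same point more cheaply by compressing with $\boldsymbol{I}\otimes\boldsymbol{v}$ for a left eigenvector $\boldsymbol{v}$, which yields $(\boldsymbol{v}^*\boldsymbol{X}_{\mathcal{D}}\boldsymbol{v})\,f_{\mathcal{D}}(\lambda)$ directly). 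One caveat: as printed, \eqref{Original pole placement} demands $\succ 0$ while Definition \ref{Def:3} characterizes $\mathcal{D}$ by $f_{\mathcal{D}}(z)\prec 0$, so your diagonal blocks come out positive definite and would certify $\lambda_i\notin\mathcal{D}$. The sign in \eqref{Original pole placement} must be $\prec 0$ (consistently with \eqref{circle LMI region} and the original reference); your argument only closes under that corrected sign, and you should say so explicitly.

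The genuine gap is in the ``only if'' direction. Your first route --- decompose $\mathcal{D}$ into half-planes and disks, build elementary certificates, and take a convex combination --- fails twice over: a general LMI region (a conic sector, an ellipse) is not a finite intersection of half-planes and disks, and even when $\mathcal{D}=\mathcal{D}_1\cap\mathcal{D}_2$, a convex combination $\tfrac12(\boldsymbol{X}_1+\boldsymbol{X}_2)$ of a certificate for $\mathcal{D}_1$ and one for $\mathcal{D}_2$ certifies neither, because the cross terms such as $\boldsymbol{L}_{\mathcal{D}_1}\otimes\boldsymbol{X}_2+\boldsymbol{M}_{\mathcal{D}_1}\otimes(\boldsymbol{A}_{\boldsymbol{L}}^\top\boldsymbol{X}_2)+\cdots$ have no known sign; convexity of the feasible set for a \emph{fixed} region does not manufacture a common certificate for two regions. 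Your second route (homotopy to a normal matrix) is the right instinct but, as you yourself note, upper semicontinuity of the feasible set cannot carry a certificate across the whole deformation without circularly invoking the theorem at intermediate points. The standard repair is the scaled Schur form: conjugate $\boldsymbol{A}_{\boldsymbol{L}}^\top$ by some $\boldsymbol{S}$ to $\boldsymbol{D}+\epsilon\boldsymbol{N}$ with $\boldsymbol{D}$ diagonal and $\boldsymbol{N}$ strictly triangular of bounded entries, observe that $\boldsymbol{X}_{\mathcal{D}}=\boldsymbol{I}$ works for $\boldsymbol{D}$ (after the shuffle the matrix is block diagonal with blocks $f_{\mathcal{D}}(\lambda_i)$), absorb the $O(\epsilon)$ contribution of $\boldsymbol{N}$ into the definiteness margin, and transport the certificate back through the similarity via the congruence identity for the map $\boldsymbol{X}\mapsto\boldsymbol{L}_{\mathcal{D}}\otimes\boldsymbol{X}+\boldsymbol{M}_{\mathcal{D}}\otimes(\boldsymbol{A}\boldsymbol{X})+\boldsymbol{M}_{\mathcal{D}}^\top\otimes(\boldsymbol{A}\boldsymbol{X})^\top$. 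Without this (or an equivalent) construction the converse is not established.
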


If the LMI region $\mathcal{D}$ is a circle with radius $r_{\mathcal{D}}$ centered at the origin, \eqref{Original pole placement} can be converted to\cite{chilali1996h,wisniewski2019discrete}
\begin{align}\label{circle LMI region}
	\begin{bmatrix}
		-r_{\mathcal{D}} \boldsymbol{X}_{\mathcal{D}} & \boldsymbol{A}^\top_{\boldsymbol{L}} \boldsymbol{X}_{\mathcal{D}} \\
		(\boldsymbol{A}^\top_{\boldsymbol{L}} \boldsymbol{X}_{\mathcal{D}})^\top & -r_{\mathcal{D}} \boldsymbol{X}_{\mathcal{D}}
	\end{bmatrix} \prec 0 .
\end{align}

\subsubsection{data-driven LQR}
For the dual system \eqref{Eq:dual system}, define the performance signal $\boldsymbol{z}_{dl}(t)$ as 
\begin{align}\label{performance signal}
	\boldsymbol{z}_{dl}(t) = 
	\begin{bmatrix}
		\boldsymbol{Q}_{dl}^{1/2}  & \boldsymbol{0} \\ \boldsymbol{0} & \boldsymbol{R}_{dl}^{1/2}
	\end{bmatrix} 
	\begin{bmatrix}
		\boldsymbol{x}_{dl}(t) \\  \boldsymbol{u}_{dl}(t)
	\end{bmatrix},
\end{align}
where $\boldsymbol{Q}_{dl} \! = \! \boldsymbol{Q}_{dl}^\top \! \succ \! 0,  \boldsymbol{R}_{dl} \! = \! \boldsymbol{R}_{dl}^\top \! \succ \! 0$ are weighting matrices.

Thus, we have the data-driven LQR lemma as:
\begin{lemma}\cite{de2019formulas}\label{lemma4}
	Let \textit{Assumption \ref{assum:3}} hold. Then, the data-driven LQR state-feedback gain $\boldsymbol{L}^\top$ for system \eqref{Eq:dual system} can be computed as $\boldsymbol{L}^\top = \boldsymbol{U}^{dl}_{0,T-1} \boldsymbol{X} (\boldsymbol{X}^{dl}_{0,T-1}\boldsymbol{X})^{-1}$ where $\boldsymbol{X}$ optimizes
	\begin{subequations}
		\begin{align}
				\min_{\boldsymbol{X},\boldsymbol{W}} \quad & J = \mathrm{Trace}(\boldsymbol{Q}_{dl} \boldsymbol{X}^{dl}_{0,T-1}\boldsymbol{X}) + \mathrm{Trace}( \boldsymbol{W} ), \\
				\label{performance}	\text{s.t.} \quad 
				& \begin{bmatrix}
					\boldsymbol{W} & \boldsymbol{R}_{dl}^{1/2} \boldsymbol{U}^{dl}_{0,T-1} \boldsymbol{X} \\
					(\boldsymbol{U}^{dl}_{0,T-1} \boldsymbol{X})^\top \boldsymbol{R}_{dl}^{1/2} & \boldsymbol{X}^{dl}_{0,T-1}\boldsymbol{X}
				\end{bmatrix}	\succ 0 ,\\
				\label{stability} &\begin{bmatrix}
					\boldsymbol{X}^{dl}_{0,T-1}\boldsymbol{X} - \boldsymbol{I}_n & \boldsymbol{X}^{dl}_{1,T}\boldsymbol{X} \\
					(\boldsymbol{X}^{dl}_{1,T}\boldsymbol{X})^\top & \boldsymbol{X}^{dl}_{0,T-1}\boldsymbol{X}
				\end{bmatrix} \succ 0.
		\end{align}
	\end{subequations}
\end{lemma}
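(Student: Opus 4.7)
The plan is to recast the data-driven LQR design for the dual system \eqref{Eq:dual system} as a pair of LMIs, by combining the data-based closed-loop parameterization of De Persis--Tesi with the standard LQR-via-LMI recipe. First, I would leverage \textit{Assumption \ref{assum:3}} together with \textit{Lemma \ref{lemma2}} to establish the algebraic identity
\begin{align*}
\boldsymbol{X}^{dl}_{1,T} = \boldsymbol{A}^{\top}\boldsymbol{X}^{dl}_{0,T-1} + \boldsymbol{C}^{\top}\boldsymbol{U}^{dl}_{0,T-1},
\end{align*}
which together with the full row rank of $\boldsymbol{X}^{dl}_{0,T-1}$ implies that every admissible feedback $\boldsymbol{L}^{\top}$ admits a right-inverse representation $\boldsymbol{L}^{\top} = \boldsymbol{U}^{dl}_{0,T-1}\boldsymbol{G}$ with $\boldsymbol{X}^{dl}_{0,T-1}\boldsymbol{G} = \boldsymbol{I}_{n}$, so that the closed-loop matrix can be written purely in terms of data as $\boldsymbol{A}^{\top} + \boldsymbol{C}^{\top}\boldsymbol{L}^{\top} = \boldsymbol{X}^{dl}_{1,T}\boldsymbol{G}$. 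This model-free parameterization is the backbone of the whole argument.

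Next, I would change variables via $\boldsymbol{G} = \boldsymbol{X}(\boldsymbol{X}^{dl}_{0,T-1}\boldsymbol{X})^{-1}$, so that $\boldsymbol{Y} := \boldsymbol{X}^{dl}_{0,T-1}\boldsymbol{X}$ plays the role of the Lyapunov certificate $\boldsymbol{P}^{-1}$. The closed-loop Lyapunov decrease inequality $\boldsymbol{Y} - (\boldsymbol{X}^{dl}_{1,T}\boldsymbol{X})\boldsymbol{Y}^{-1}(\boldsymbol{X}^{dl}_{1,T}\boldsymbol{X})^{\top} \succ \boldsymbol{I}_n$ then reproduces \eqref{stability} after a single Schur-complement step, the $-\boldsymbol{I}_n$ shift providing exactly the contraction margin needed to dominate the state-penalty $\boldsymbol{Q}_{dl}$ in the subsequent cost estimate. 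For the performance side, I would bound the infinite-horizon quadratic cost $\mathrm{Trace}(\boldsymbol{Q}_{dl}\boldsymbol{Y}) + \mathrm{Trace}(\boldsymbol{R}_{dl}\boldsymbol{L}^{\top}\boldsymbol{Y}\boldsymbol{L})$ by the surrogate $\mathrm{Trace}(\boldsymbol{Q}_{dl}\boldsymbol{Y}) + \mathrm{Trace}(\boldsymbol{W})$ under the slack constraint $\boldsymbol{W} \succeq \boldsymbol{R}_{dl}^{1/2}\boldsymbol{L}^{\top}\boldsymbol{Y}\boldsymbol{L}\boldsymbol{R}_{dl}^{1/2}$; using $\boldsymbol{L}^{\top}\boldsymbol{Y} = \boldsymbol{U}^{dl}_{0,T-1}\boldsymbol{X}$ and one more Schur complement yields exactly \eqref{performance}. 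Minimizing the surrogate then minimizes an upper bound on the LQR cost, with the bound tight at the optimum.

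The main obstacle I anticipate is justifying the dual-system dynamic identity on which the whole parameterization rests. \textit{Lemma \ref{lemma2}} only guarantees that \eqref{trajectory of the dual system} is a dual trajectory in the least-squares sense, so one must show that the two rank conditions \eqref{output rank condition} actually make the pseudoinverses in that construction residual-free on the relevant block, and, moreover, that the resulting data matrices remain compatible with invertibility of $\boldsymbol{X}^{dl}_{0,T-1}\boldsymbol{X}$ along every feasible path of the optimization. A second, subtler point is that the change of variable $\boldsymbol{G}\mapsto\boldsymbol{X}$ is many-to-one, so one must also check that the feasible set of \eqref{performance}--\eqref{stability} is non-empty whenever an LQR-stabilizing $\boldsymbol{L}^{\top}$ exists; this is where the full-rank condition on $[\boldsymbol{X}_{1,T}^{\top}\;\boldsymbol{Y}_{0,T-1}^{\top}]$ is indispensable. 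Once these two structural points are secured, the two LMIs follow by routine Schur-complement manipulations, and the optimum $\boldsymbol{X}^{*}$ recovers the classical LQR gain through the formula given in the statement.
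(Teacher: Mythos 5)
The paper does not actually prove this lemma: it is imported verbatim from \cite{de2019formulas} (the data-driven LQR theorem of De Persis and Tesi), specialized to the dual system \eqref{Eq:dual system} via the substitution $(\boldsymbol{A},\boldsymbol{B},\boldsymbol{K})\mapsto(\boldsymbol{A}^\top,\boldsymbol{C}^\top,\boldsymbol{L}^\top)$. Your reconstruction is exactly the standard argument of that reference: the data-based closed-loop parameterization $\boldsymbol{L}^\top=\boldsymbol{U}^{dl}_{0,T-1}\boldsymbol{G}$ with $\boldsymbol{X}^{dl}_{0,T-1}\boldsymbol{G}=\boldsymbol{I}_n$, the change of variables $\boldsymbol{X}=\boldsymbol{G}\boldsymbol{Y}$ with $\boldsymbol{Y}=\boldsymbol{X}^{dl}_{0,T-1}\boldsymbol{X}$, and the two Schur complements that yield \eqref{stability} (Lyapunov decrease with margin $\boldsymbol{I}_n$) and \eqref{performance} (slack bound on the control-energy term). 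Those steps are correct and are the intended proof of the cited result.

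The genuine gap is the point you yourself flag as the "main obstacle" --- and it is worse than you suspect. You propose to show that the rank conditions \eqref{output rank condition} make the pseudoinverse construction of Lemma \ref{lemma2} residual-free, so that the backbone identity $\boldsymbol{X}^{dl}_{1,T}=\boldsymbol{A}^\top\boldsymbol{X}^{dl}_{0,T-1}+\boldsymbol{C}^\top\boldsymbol{U}^{dl}_{0,T-1}$ holds exactly. But for exact data of the autonomous system \eqref{Eq:ULAS} one has $\left[\boldsymbol{X}_{1,T}^\top \ \ \boldsymbol{Y}_{0,T-1}^\top\right]=\boldsymbol{X}_{0,T-1}^\top\left[\boldsymbol{A}^\top \ \ \boldsymbol{C}^\top\right]$, whose rank is at most $n$; the requirement $\mathrm{rank}\left(\left[\boldsymbol{X}_{1,T}^\top \ \ \boldsymbol{Y}_{0,T-1}^\top\right]\right)=n+p$ in Assumption \ref{assum:3} can therefore only be met when the data are perturbed by noise or disturbance, in which case the dual ``trajectory'' satisfies the dual dynamics only in the least-squares sense, exactly as Lemma \ref{lemma2} is careful to state. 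Consequently the identity on which your entire parameterization rests cannot be derived under the paper's hypotheses; it holds only approximately, and the same applies to the differenced data \eqref{data for offsets system}. The paper sidesteps this by pure citation, since the theorem in \cite{de2019formulas} is proved for data generated exactly by a system with inputs. Your derivation is thus a correct proof of the cited result applied to surrogate data \emph{assumed} to be an exact, full-row-rank dual trajectory; the step you isolate as needing justification should be recorded as an additional assumption (or an explicit approximation) rather than as something provable from \eqref{output rank condition}.
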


\subsubsection{Pole Placement for Unknown Systems}
Since $\boldsymbol{A}^\top_{\boldsymbol{L}}$ is unknown, \eqref{circle LMI region} needs to be converted into data-driven form. Inspired by \textit{Theorem 2} of \cite{de2019formulas}, we have that $\boldsymbol{A}^\top_{\boldsymbol{L}} = \boldsymbol{X}^{dl}_{1,T} \boldsymbol{G}_{L}$ where $\boldsymbol{G}_{L}$ is a $T \times n$ matrix satisfying 
\begin{align}\label{G_L}
	\begin{bmatrix}
		\boldsymbol{L}^\top \\
		\boldsymbol{I}_n 
	\end{bmatrix} = 	
	\begin{bmatrix}
		\boldsymbol{U}^{dl}_{0,T-1} \\
		\boldsymbol{X}^{dl}_{0,T-1}
	\end{bmatrix}\boldsymbol{G}_{L}.
\end{align} 
Let $\boldsymbol{X} = \boldsymbol{G}_{L}\boldsymbol{X}_{\mathcal{D}}$, it is easy to deduce that
\begin{align}\label{AL}
	\boldsymbol{A}^\top_{\boldsymbol{L}} = \boldsymbol{X}^{dl}_{1,T} \boldsymbol{X} \boldsymbol{X}^{-1}_{\mathcal{D}}, \quad \boldsymbol{X}_{\mathcal{D}} = \boldsymbol{X}^{dl}_{0,T-1}\boldsymbol{X}.
\end{align}
Thus, \eqref{circle LMI region} can be converted to
\begin{align}\label{dd circle LMI region}
	\begin{bmatrix}
		-r_{\mathcal{D}} \boldsymbol{X}^{dl}_{0,T-1}\boldsymbol{X} & \boldsymbol{X}^{dl}_{1,T} \boldsymbol{X} \\
		(\boldsymbol{X}^{dl}_{1,T} \boldsymbol{X})^\top & -r_{\mathcal{D}} \boldsymbol{X}^{dl}_{0,T-1}\boldsymbol{X}
	\end{bmatrix} \prec 0 .
\end{align}

In addition, considering that $\boldsymbol{X}_{\mathcal{D}}$ is a symmetric positive definite matrix, we give the following symmetric positive matrix constraints\cite{boukas2008control}:
\begin{subequations}
	\begin{align}
			\min_{\beta} \quad  & \beta \\
			\label{symmetry} \text{s.t.} \
			&\begin{bmatrix}
				-\beta \boldsymbol{I}_n &  \boldsymbol{X}^{dl}_{0,T-1}\boldsymbol{X} - \boldsymbol{X}^\top (\boldsymbol{X}^{dl}_{0,T-1})^\top \\
				\star & - \boldsymbol{I}_n
			\end{bmatrix} \prec 0 ,\\
			\label{positive} & \  \beta > 0, \  \boldsymbol{X}^{dl}_{0,T-1} \boldsymbol{X} \succ 0,
	\end{align}
\end{subequations}	
where $\star $ is used to represent the symmetry of the matrix in \eqref{symmetry}. The symmetric matrix constraints can be found in section 10.1 of \cite{boukas2008control}.

To sum up, we give the following theorem to the observer pole placement for unknown systems.
\begin{theorem}\label{pole placement unknown systems}
	For the unknown linear observable system \eqref{Eq:ULAS}, let \textit{Assumption \ref{assum:3}} hold. Then, the observer pole is located in an LMI region $\mathcal{D}$ with performance signal \eqref{performance signal} if the observer gain $\boldsymbol{L}= ( \boldsymbol{U}^{dl}_{0,T-1} \boldsymbol{X} (\boldsymbol{X}^{dl}_{0,T-1}\boldsymbol{X})^{-1} )^\top $ where $\boldsymbol{X}$ optimizes
	\begin{subequations}\label{optimizes X}
		\begin{align}
				\min_{\boldsymbol{X},\boldsymbol{W},\beta} \  & J = \mathrm{Trace}(\boldsymbol{Q}_{dl} \boldsymbol{X}^{dl}_{0,T-1}\boldsymbol{X}) + \mathrm{Trace}( \boldsymbol{W} ) + \beta \\
				\text{s.t.} \ 
				&  \eqref{performance}, \eqref{stability}, \eqref{dd circle LMI region}, \eqref{symmetry} \ \mathrm{and} \ \eqref{positive},
		\end{align}
	\end{subequations}	
	where $\mathcal{D}$ is a circle with radius $r_{\mathcal{D}}$ centered at the origin.
\end{theorem}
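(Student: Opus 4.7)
The plan is to combine Lemma \ref{lemma2}, Lemma \ref{lemma3}, and Lemma \ref{lemma4} and show that the data-driven LMI (\ref{dd circle LMI region}) is equivalent, under the substitution $\boldsymbol{X}=\boldsymbol{G}_L\boldsymbol{X}_{\mathcal{D}}$, to the model-based pole-placement LMI (\ref{circle LMI region}). Because the original pair $(\boldsymbol{A},\boldsymbol{C})$ is unknown, I exploit duality: observability of (\ref{Eq:ULAS}) is equivalent to controllability of (\ref{Eq:dual system}), so placing the eigenvalues of the observer closed-loop matrix $\boldsymbol{A}+\boldsymbol{L}\boldsymbol{C}$ inside $\mathcal{D}$ is the same as placing those of $\boldsymbol{A}_{\boldsymbol{L}}^{\top}=\boldsymbol{A}^{\top}+\boldsymbol{C}^{\top}\boldsymbol{L}^{\top}$, i.e.\ designing a state-feedback gain $\boldsymbol{L}^{\top}$ for the dual system.

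First I would invoke Lemma \ref{lemma3} specialized to the disk of radius $r_{\mathcal{D}}$ centered at the origin, which by the Schur complement yields (\ref{circle LMI region}) as a necessary and sufficient condition for pole location. Then I would use the key identity (\ref{G_L})--(\ref{AL}), which encodes $\boldsymbol{A}_{\boldsymbol{L}}^{\top}=\boldsymbol{X}^{dl}_{1,T}\boldsymbol{G}_L$ purely in terms of the data matrices produced by Lemma \ref{lemma2}, and set $\boldsymbol{X}:=\boldsymbol{G}_L\boldsymbol{X}_{\mathcal{D}}$. Substituting $\boldsymbol{X}_{\mathcal{D}}=\boldsymbol{X}^{dl}_{0,T-1}\boldsymbol{X}$ and $\boldsymbol{A}_{\boldsymbol{L}}^{\top}\boldsymbol{X}_{\mathcal{D}}=\boldsymbol{X}^{dl}_{1,T}\boldsymbol{X}$ into (\ref{circle LMI region}) produces exactly (\ref{dd circle LMI region}), proving the pole placement part.

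Next I would address performance. Constraints (\ref{performance}) and (\ref{stability}) are imported verbatim from Lemma \ref{lemma4}; the former, via a Schur complement, upper-bounds $\mathrm{Trace}(\boldsymbol{R}_{dl}^{1/2}\boldsymbol{U}^{dl}_{0,T-1}\boldsymbol{X}(\boldsymbol{X}^{dl}_{0,T-1}\boldsymbol{X})^{-1}\boldsymbol{X}^{\top}(\boldsymbol{U}^{dl}_{0,T-1})^{\top}\boldsymbol{R}_{dl}^{1/2})$ by $\mathrm{Trace}(\boldsymbol{W})$, so that minimizing $\mathrm{Trace}(\boldsymbol{Q}_{dl}\boldsymbol{X}^{dl}_{0,T-1}\boldsymbol{X})+\mathrm{Trace}(\boldsymbol{W})$ minimizes an upper bound on the LQR cost associated with the performance signal (\ref{performance signal}); the latter guarantees closed-loop Schur stability of the dual system, which is consistent with (and implied by) the disk condition for any $r_{\mathcal{D}}\le 1$. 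Finally, from Lemma \ref{lemma4} the gain recovers as $\boldsymbol{L}^{\top}=\boldsymbol{U}^{dl}_{0,T-1}\boldsymbol{G}_L=\boldsymbol{U}^{dl}_{0,T-1}\boldsymbol{X}(\boldsymbol{X}^{dl}_{0,T-1}\boldsymbol{X})^{-1}$, which requires $\boldsymbol{X}^{dl}_{0,T-1}\boldsymbol{X}$ to be invertible; this is enforced by (\ref{positive}).

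The main obstacle I anticipate is not the algebraic manipulations themselves but justifying that the substitution $\boldsymbol{X}=\boldsymbol{G}_L\boldsymbol{X}_{\mathcal{D}}$ is lossless in both directions, namely that every feasible $\boldsymbol{X}$ in the data-driven program corresponds to some $(\boldsymbol{L},\boldsymbol{X}_{\mathcal{D}})$ feasible for the model-based LMI, and vice versa. This requires that $\boldsymbol{X}^{dl}_{0,T-1}\boldsymbol{X}$ be not merely invertible but symmetric and positive definite, since Lemma \ref{lemma3} demands a symmetric positive definite $\boldsymbol{X}_{\mathcal{D}}$. The generic product $\boldsymbol{X}^{dl}_{0,T-1}\boldsymbol{X}$ need not be symmetric, which is precisely why the auxiliary LMI (\ref{symmetry})--(\ref{positive}) is appended: penalizing $\beta$ in the cost drives $\boldsymbol{X}^{dl}_{0,T-1}\boldsymbol{X}-\boldsymbol{X}^{\top}(\boldsymbol{X}^{dl}_{0,T-1})^{\top}$ toward zero (via the Schur complement in (\ref{symmetry})) while (\ref{positive}) maintains positive definiteness, so that $\boldsymbol{X}_{\mathcal{D}}:=\boldsymbol{X}^{dl}_{0,T-1}\boldsymbol{X}$ is admissible in Lemma \ref{lemma3}. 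Once this equivalence is established, the conclusion follows by stringing together the three lemmas on the feasible set of (\ref{optimizes X}).
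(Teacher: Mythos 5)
Your proposal follows essentially the same route the paper takes: the paper states this theorem without a separate proof, as a summary of exactly the chain you describe — duality (Appendix A.1), Lemma \ref{lemma3} specialized to the disk giving \eqref{circle LMI region}, the substitutions \eqref{G_L}--\eqref{AL} turning it into \eqref{dd circle LMI region}, the LQR constraints from Lemma \ref{lemma4}, and the auxiliary constraints \eqref{symmetry}--\eqref{positive} to make $\boldsymbol{X}_{\mathcal{D}}=\boldsymbol{X}^{dl}_{0,T-1}\boldsymbol{X}$ admissible. Your remark that minimizing $\beta$ only drives the skew part of $\boldsymbol{X}^{dl}_{0,T-1}\boldsymbol{X}$ toward zero rather than enforcing exact symmetry is an accurate reading of the paper's own construction, which handles this point in the same (soft) way.
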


Finally, the following algorithm is provided to satisfy \eqref{constraint rho mu}.
\begin{algorithm}[]  %
	\label{algorithm2}
	\caption{Pole Placement for the unknown system \eqref{Eq:LAS} to satisfy \eqref{constraint rho mu}.}
	\textbf{Data Preprocessing:} Given a trajectory of the system \eqref{Eq:LAS} as $\{ \boldsymbol{x}_{[0, T]}, \boldsymbol{y}_{[0, T-1]} \}$; Given the initial guess $\rho_0  \geq 1$ and $r_{\mathcal{D}} < 1 $; Given the symmetric weighting matrices $\boldsymbol{Q}_{dl}$, $\boldsymbol{R}_{dl}$; Define $\boldsymbol{X}_{0,T-1}$, $\boldsymbol{X}_{1,T}$, $\boldsymbol{Y}_{0,T-1}$ as \eqref{data for offsets system}; Define $\boldsymbol{X}^{dl}_{0,T-1}$, $\boldsymbol{X}^{dl}_{1,T}$, $\boldsymbol{U}^{dl}_{0,T-1}$ as \eqref{trajectory of the dual system}.\\ 
	\textbf{Numerical Iteration:}\\
	\eIf{\eqref{output rank condition} holds}{
		\While{$\rho_0 \geq 1$}{
			Solve equation \eqref{optimizes X} and obtain $\boldsymbol{X}$\;
			Let $\rho_0 \! = \! 6\lambda_{max}( \! \boldsymbol{A}_{\boldsymbol{L}}^\top \! \boldsymbol{A}_{\boldsymbol{L}})$ and 
			$\mu_0 \! = \! 6\lambda_{max}(\boldsymbol{L}^\top \! \boldsymbol{L} )$ where $\boldsymbol{A}_{\boldsymbol{L}}^\top $ and $\boldsymbol{L}$ are defined in \eqref{AL} and \textit{Theorem \ref{pole placement unknown systems}}, respectively.\\
			\eIf{$\rho_0 < 1$}{
				Return $\rho_0$ and $\mu_0$. Break.
			}{
				$r_{\mathcal{D}} = r_{\mathcal{D}}/2$. 
			}			
		}
		
	}{
		The rank condition \eqref{output rank condition} is not satisfied. Break.
	}
\end{algorithm}
\begin{remark}
	\textbf{Algorithm \ref{algorithm2}} reduces the LMI region by setting $r_{\mathcal{D}} = r_{\mathcal{D}}/2$, however this is not unique. Other ways of reducing or adjusting $r_{\mathcal{D}}$ are also feasible, such as $r_{\mathcal{D}} = r_{\mathcal{D}} - a$ where $a$ is a positive number. In addition, Algorithm \ref{algorithm2} requires that the collected data come from system \eqref{Eq:LAS}, that is, an offsets system without noise and disturbance. However, the actual data is derived from system \eqref{Eq:LDAS} with noise and disturbance. It requires Algorithm \ref{algorithm2} to be robust. One can select a smaller $\rho_0$ or expand the LMI region $\mathcal{D}$ into a robust LMI region\cite{hypiusova2019robust}. 
\end{remark}

\bibliographystyle{IEEEtran}
\bibliography{mybibfile} 

\vfill

\end{document}